\definecolor{winered}{rgb}{0.5,0,0}
\newcommand{\mc}{\mathcal}
\newcommand{\normlr}[1]{\left\lVert#1\right\rVert}
\newcommand{\stackref}[2]{
\readlist*\mylist{#1}
\stackrel{\mbox{\footnotesize\foreachitem\x\in\mylist[]{\ifnum\xcnt=1\else,\fi\eqref{\x}}}}{#2}
}
\newcommand{\abs}[1]{\lvert #1 \rvert}
\newcommand{\abslr}[1]{\left\lvert#1\right\rvert}
\newcommand{\tr}[1]{\text{trace}\left( #1 \right)}
\DeclarePairedDelimiter\ceil{\lceil}{\rceil}
\DeclareMathOperator*{\argmin}{\arg\!\min}
\DeclareMathOperator{\Var}{Var}
\DeclarePairedDelimiterX{\norm}[1]{\lVert}{\rVert}{#1}
\DeclareMathOperator{\E}{\mathbb{E}}
\newtheorem{problem}{Problem}
\newtheorem{theorem}{Theorem}
\newtheorem{lemma}{Lemma}
\date{}
\title{\LARGE Outlier-Robust Linear System Identification Under Heavy-tailed Noise}
\author{Vinay Kanakeri and Aritra Mitra} 
\begin{document}
\maketitle
\footnotetext[1]{The authors are with the Department of Electrical and Computer Engineering, North Carolina State University. Email: {\tt \{vkanake, amitra2\}@ncsu.edu}.}
\begin{abstract}%
 We consider the problem of estimating the state transition matrix of a linear time-invariant (LTI) system, given access to multiple independent trajectories sampled from the system. Several recent papers have conducted a non-asymptotic analysis of this problem, relying crucially on the assumption that the process noise is either Gaussian or sub-Gaussian, i.e., ``light-tailed". In sharp contrast, we work under a significantly weaker noise model, assuming nothing more than the existence of the fourth moment of the noise distribution. For this setting, we provide the first set of results demonstrating that one can obtain sample-complexity bounds for linear system identification that are nearly of the same order as under sub-Gaussian noise. To achieve such results, we develop a novel robust system identification algorithm that relies on constructing multiple weakly-concentrated estimators, and then boosting their performance using suitable tools from high-dimensional robust statistics. Interestingly,  our analysis reveals how the kurtosis of the noise distribution, a measure of heavy-tailedness, affects the number of trajectories needed to achieve desired estimation error bounds. Finally, we show that our algorithm and analysis technique can be easily extended to account for scenarios where an adversary can arbitrarily corrupt a small fraction of the collected trajectory data.  Our work takes the first steps towards building a robust statistical learning theory for control under non-ideal assumptions on the data-generating process. 
\end{abstract}

\section{Introduction}
\label{sec:Intro}
Given the empirical success of reinforcement learning in various complex tasks spanning video games to robotics, there has been a recent growth of interest in understanding the performance of feedback control systems when the model of the system is \emph{unknown}~\cite{hu2023toward, tsiamis2023statistical}. To mitigate uncertainty in the model, one natural strategy is to first use data generated by the system to learn the system parameters - a task known as \emph{system identification.} Subsequently, using the learned system model, one can appeal to either certainty-equivalent or robust control. For such a data-driven approach to yield the desired stability and performance guarantees, it is essential to quantify how data sampled from the system can help reduce the uncertainty in the underlying dynamics. This is particularly important for reliable operation in safety-critical applications (e.g., self-driving cars) when one integrates data-driven approaches into the feedback control loop. In this context, a growing body of work has drawn upon tools from learning theory and high-dimensional statistics to characterize the number of samples needed to accurately estimate the system parameters, given access to noisy data. While the results in this space have provided a fine-grained understanding of how the nature of the dynamical system (stable vs. unstable) shapes the sample-complexity bounds for system identification, all such results have been derived under somewhat idealistic assumptions on the data-generating process. In particular, the process noise exciting the dynamics is assumed to be either Gaussian or sub-Gaussian, i.e., ``light-tailed", an assumption that may not hold for real-world environments. This begs the following questions. \emph{Under less favorable circumstances when the noise is heavy-tailed or even adversarial, can we still provide a finite-sample analysis of system identification? If so, is there any hope of recovering similar estimation error bounds as achievable under sub-Gaussian noise?} 
In this paper, we provide the first rigorous examination of the above questions for the task of linear system identification.  

More precisely, we consider a linear time-invariant (LTI) system: 
$x_{t+1}= A x_t + w_t$, where $x_t \in \mathbb{R}^d$ and $w_t \in \mathbb{R}^d$ are the state of the system and process noise at time $t$, respectively, and $A \in \mathbb{R}^{d \times d}$ is the unknown state transition matrix. Given access to $N$ independent trajectories sampled from this system, the goal is to construct an estimate $\hat{A}$ of $A$, and characterize the corresponding sample-complexity bounds within the probably approximately correct (PAC) framework. In other words, we wish to precisely quantify the number of trajectories needed to achieve a prescribed estimation accuracy $\varepsilon$ with a confidence level specified by failure probability $\delta.$ The main contribution of this work is to offer the first PAC bounds for this setting under the assumption that the noise process $\{w_t\}$ admits a finite fourth moment, \emph{and nothing more}. Interestingly, we show that with a suitably designed robust estimator of $A$, one can (almost) recover bounds known in the literature under the significantly stronger assumption of sub-Gaussian noise.\footnote{For a precise definition of sub-Gaussianity, see Chapters 2 and 3 of~\cite{vershynin2018high}. Note that Proposition 2.5.2. of this reference guarantees the existence of all finite moments of a sub-Gaussian distribution.} Before elaborating further on our results, we briefly summarize the relevant literature on system identification. 

\noindent \textbf{Related Work.} Linear system identification is a fundamental problem in control theory that finds applications in time-series forecasting, finance, and reinforcement learning. Classical treatments of this problem primarily focus on asymptotic results~\cite{lai_asymp, ljung1987theory}. Our interest, however, is in a more recent strand of literature that aims to provide a finer understanding by deriving non-asymptotic bounds on the amount of data that needs to be collected from the system to meet desired performance guarantees. To our knowledge, the first results of this kind were obtained in~\cite{rantzer2018} for scalar LTI systems. In follow-up work~\cite{dean2020sample}, the results were extended to vector (potentially unstable) LTI systems for the multi-trajectory setting, i.e., when multiple independent trajectories are available as data to the learner. The multi-trajectory setting we consider here is akin to that in~\cite{dean2020sample} and has also appeared in various other works~\cite{zheng2020non, xing2020linear, xin2022learning, tu2024learning}. When data is collected from a single trajectory, the analysis becomes much more challenging since such data is no longer independent and identically distributed (i.i.d.), but rather temporally correlated. For stable and marginally stable LTI systems, finite-sample results for the single-trajectory case were derived in~\cite{simchowitz1},~\cite{oymak2019non}, and~\cite{jedra2022finite}. For unstable LTI systems, results under single-trajectory data were obtained in~\cite{sarkar2019near}. 

Variations of the basic linear system identification problem involve scenarios where the system state is not fully observed, and the system is only excited by noise; see the work of~\cite{tsiamis2019finite} in this regard. Other variants include the problem of sparse system identification, considered in~\cite{fattahi2019learning} and~\cite{sun2020finite}. For a detailed discussion of the latest results on system identification, we refer the reader to the excellent tutorials~\cite{tsiamis2023statistical} and~\cite{ziemann2023tutorial}. Despite the wealth of literature that has emerged on the topic in recent years, the analysis in each of the papers mentioned above hinges crucially on leveraging concentration bounds for Gaussian or sub-Gaussian noise distributions. One notable exception is the work of~\cite{faradonbeh2018finite}, where the authors consider a noise model weaker than ones admitting sub-Gaussian tails. Nonetheless, the sub-Weibull noise model in~\cite{faradonbeh2018finite} ensures the existence of all finite moments of the noise distribution. This leads to the following question we investigate in our work: 

\textit{Can we derive finite sample bounds for linear system identification under heavy-tailed noise distributions that admit no more than the fourth moment?}

The recent survey paper~\cite{tsiamis2023statistical} identifies this as an open question. We provide an answer in the affirmative via the following contributions.

$\bullet$ \textbf{Problem Formulation.} Our study is motivated by an interesting observation made in~\cite{tsiamis2023statistical}. The authors note that for a heavy-tailed noise model where, for instance, $\mathbb{E}[\Vert w_t \Vert^4] < \infty$ but $\mathbb{E}[\Vert w_t \Vert^p] = \infty$ for some finite $p >4$, while the ordinary least squares (OLS) estimator might still be optimal in expectation under i.i.d. data, it is no longer optimal w.r.t. its dependence on the failure probability $\delta$~\cite{oliveira2016lower}. In particular, it fails to achieve the optimal logarithmic dependence of $\log(1/\delta)$ for all distributions within the aforementioned heavy-tailed noise class. In the context of heavy-tailed linear system identification, we examine for the first time whether such an optimal $\log(1/\delta)$ dependence can be reinstated. 

$\bullet$ \textbf{Novel Algorithm.} In practice, it may not be possible to ascertain ahead of time whether the noise is sub-Gaussian or heavy-tailed. As such, we would ideally like to have a system identification algorithm that is \emph{agnostic} to the nature of the noise and yields similar guarantees under both light- and heavy-tailed distributions. As discussed earlier, the OLS estimator fails in this regard. This motivates us to develop a novel algorithm titled \texttt{Robust-SysID} in Section~\ref{sec:Algo}. Our main idea is to first construct multiple OLS estimators of $A$ by suitably partitioning the collected trajectories into buckets. To ``boost" the performance of such weakly concentrated estimators, we employ the notion of a geometric median w.r.t. the Frobenius norm. While similar ideas have been pursued for robust mean estimation~\cite{minsker2015geometric}, we show how they can be also employed for system identification.  

$\bullet$ \textbf{Matching Sub-Gaussian Rates under Heavy-tailed Noise.} Existing analyses for system identification exploit various concentration tools for sub-Gaussian and sub-exponential distributions. Unfortunately, our noise model precludes the use of such tools, necessitating a new proof technique. To illustrate some of the key ideas that show up in our analysis, we consider a scalar setting in Section~\ref{sec:scalar}. Our main result for the scalar case, namely Theorem~\ref{thm:scalar_case}, recovers the \emph{exact same error bound as under sub-Gaussian noise}; in particular, we are able to achieve the desired $\log(1/\delta)$ dependence on the failure probability. Unlike the analogous sub-Gaussian result, however, the number of trajectories needs to scale with the \emph{kurtosis} of the noise distribution, i.e., the ratio of the fourth moment to the square of the variance. While this requirement captures the effect of heavy-tailed noise, whether it is fundamental is an open question. The extension to the vector setting requires much more work to control the smallest eigenvalue of the empirical covariance matrix. Our main result for this case, namely Theorem~\ref{thm:vector_case}, once again nearly recovers the same error bound as reported in~\cite{dean2020sample} under sub-Gaussian noise, up to an extra multiplicative $O(d)$ term. \emph{To our knowledge, these are the first results to demonstrate that one can (almost) match sub-Gaussian error rates under heavy-tailed noise for system identification.}  

$\bullet$ \textbf{Robustness to Outlier Trajectories.} Finally, in Theorem~\ref{thm:corruption}, we show that our algorithm and analysis technique can be seamlessly extended to account for the scenario where an adversary can arbitrarily corrupt a small fraction $\eta$ of the trajectories. Our result in this context is consistent with those for robust mean estimation with adversarial outliers~\cite{lugosi2021robust}. 

Overall, by drawing on ideas from robust statistics, we take the first steps towards building the foundations of data-driven control under non-ideal (yet more realistic) assumptions on the data-generating process. While we focus on system identification in this paper, we anticipate that our ideas  will find broader applicability to more complex feedback control problems under uncertainty.

\noindent \textbf{Notation.} Given a positive integer $n \in \mathbb{N}$, we define the shorthand $[n] \triangleq \{1, 2, \ldots, n\}.$ For a vector $w \in \mathbb{R}^d$, we will use $w^{\top}$ to denote its transpose, and $w(i)$ to represent its $i$-th component. Unless otherwise specified, $\Vert \cdot \Vert$ will be used to denote the Euclidean norm for vectors and spectral norm for matrices. Given a matrix $M \in \mathbb{R}^{d \times d}$, we will use $\Vert M \Vert_F$ to denote its Frobenius norm. Finally, we will use $c, C, c_1, c_2, \ldots$ to represent universal constants that may change from one line to another. 
\newpage
\section{Problem Formulation}
\label{sec:Prob_form}
Consider an uncontrolled discrete-time linear time-invariant (LTI) system of the following form:
\begin{equation}
x_{t+1}= A x_t + w_t,
\label{eqn:sys_model}
\end{equation}
where $x_t \in \mathbb{R}^d$ and $w_t \in \mathbb{R}^d$ are the state of the system and process noise at time $t$, respectively, and $A \in \mathbb{R}^{d \times d}$ is the a priori \emph{unknown} state transition matrix. Without loss of generality, we assume that $x_0=0$. We further assume that the noise sequence $\{w_t\}$ is a zero-mean, independent and identically distributed (i.i.d.) stochastic process satisfying the following second- and fourth-moment bounds: 
\begin{equation}
\mathbb{E}[w_t w^{\top}_t] = \sigma^2 I_d, \hspace{3mm} \mathbb{E}[(w_t(i))^4] = \tilde{\sigma}^4, \forall i \in [d], \forall t \geq 0. 
\label{eqn:noise_model}
\end{equation}

\noindent \textbf{Data Collection.} Suppose we have access to $N$ independent trajectories of the system \eqref{eqn:sys_model}, each of length $T$. Such trajectories can be generated by rolling out the dynamics for $T$ time-steps, and then resetting the system to the zero initial condition after each rollout. Let us use $\mc{D}^{(i)}$ to denote the trajectory data $\{x^{(i)}_t\}_{1\leq t \leq T+1}$ collected during the $i$-th rollout, where $i \in [N].$ Using the collective data set $\mc{D} = \bigcup_{i \in [N]} \mc{D}^{(i)}$, the goal of a learner is to obtain an estimate of the system matrix $A$. Formally, our problem of interest can now be stated as follows.

\begin{problem} \label{prob:robust_sys_id} Consider the system in \eqref{eqn:sys_model} and the noise model in \eqref{eqn:noise_model}. Fix an accuracy parameter $\varepsilon > 0$ and a failure probability $\delta \in (0,1).$ Given the data set $\mc{D}$, construct an estimator $\hat{A}$ of $A$, and characterize its sample-complexity $N_S(\varepsilon, \delta, C_A, C_w),$ such that with probability at least $1- \delta,$ we have 
$ \Vert \hat{A} - A \Vert \leq \varepsilon,$ provided $N \geq N_S.$ Here, $C_A$ and $C_w$ are constants that depend on the system matrix $A$, and the noise parameters $\sigma, \tilde{\sigma}$, respectively. 
\end{problem}

Several comments are now in order regarding our problem formulation.
\begin{enumerate}
\item[1.] The key departure of our problem setting from existing finite-time results on linear system identification stems from the generality of the assumptions we make on the noise process $\{w_t\}$. In particular, existing work on this topic has either assumed ``light-tailed" Gaussian or sub-Gaussian noise. The only notable exception we are aware of in this regard is the work in~\cite{faradonbeh2018finite}, where the authors consider a noise process with sub-Weibull distribution. Although more general than sub-Gaussian distributions, all finite moments of a sub-Weibull distribution exist, as shown in~\cite{vladimirova2020sub}. In sharp contrast, the assumptions we make on the noise process in~\eqref{eqn:noise_model} require nothing more than the existence of the fourth moment of the noise distribution. 

\item[2.] To build intuition regarding our results, let us consider the well-studied setting where the noise process is Gaussian with variance $\sigma^2.$ Given $N$ independent trajectories, the ordinary least squares (OLS) estimator yields the following guarantee in this scenario~\cite{dean2020sample}:
\begin{equation}
\Vert \hat{A} - A \Vert \leq c_1 \sqrt{  \frac{d \log(1/\delta)}{ \lambda_{\min}(G_T) N}} \hspace{2mm} \textrm{holds with probability at least $1-\delta$}, 
\label{eqn:benchmark}
\end{equation}
provided $N \geq c_2 d \log(1/\delta).$ Here, $c_1, c_2 >0$ are suitable universal constants, and $G_T := \sum_{t=0}^{T-1} A^t (A^{\top})^t$. Our \textbf{goal} is to understand whether, and to what extent, similar guarantees can be recovered under the significantly more general noise model we consider in this paper. In particular, we ask: \emph{Is it possible to retain the mild logarithmic dependence on the failure probability $\delta$ in \eqref{eqn:benchmark}}?  This is particularly relevant when one seeks high-probability guarantees. 

\item[3.] To focus on answering the above question, we consider a system model with no inputs. Nonetheless, under the standard assumption of controllability, our techniques can be easily extended to account for a somewhat more general system of the form $x_{t+1}= A x_t + B u_t + w_t$, where $B$ is an unknown input matrix, and $u_t$ is the control input at time $t$. In a similar spirit, to isolate the effect of heavy-tailed noise, here, we do not pursue other natural extensions pertaining to partial observability, measurement noise, single-trajectory data, and nonlinear dynamics. While each of these generalizations are certainly interesting avenues for future work, they are orthogonal to the subject of this paper. 
\end{enumerate}

In the sequel, we will show that it is indeed possible to recover bounds of the form in~\eqref{eqn:benchmark}. Arriving at such bounds will, however, require an algorithmic technique different from the standard OLS approach. Furthermore, as we will elaborate later in the paper, we cannot appeal to the existing proof techniques for linear system identification that rely heavily on concentration properties of light-tailed distributions. This is all to say that the ``simple" model in \eqref{eqn:sys_model} is sufficiently interesting in its own right. 

\section{Robust System Identification Algorithm}
\label{sec:Algo}
In this section, we will develop our proposed algorithm called \texttt{Robust-SysID} that enables system identification in the face of heavy-tailed noise. Later, in Section~\ref{sec:corruption}, we will see that a minor tweak to this algorithm suffices to accommodate the presence of arbitrarily corrupted adversarial data. In other words, \emph{we will establish that \texttt{Robust-SysID} is not only robust to heavy-tailed noise, but also to adversarial outliers.} Our algorithm has three main components that we outline below. 

\textbf{Step 1: Bucketing.} In the first step, we partition the $N$ data sets into $K$ buckets denoted by $\mc{B}_1, \ldots, \mc{B}_K$, such that each bucket contains $M$ independent trajectories; here, for simplicity, we have assumed that $N=MK$. The choice of $K$ is crucial for our final bounds and will be specified later in the statement of our main results. 

\textbf{Step 2: Local Estimation per Bucket.} In the second step, we use the trajectories within each bucket to construct an OLS estimator per bucket. To make this idea precise, fix a bucket $j \in [K].$ We use the last two samples of each trajectory within $\mc{B}_j$ to construct the OLS estimator $\hat{A}_j$ for bucket $j$:
\begin{equation}
\hat{A}_j = \argmin_{\theta \in \mathbb{R}^{d \times d}} \sum_{i \in \mc{B}_j} \Vert x^{(i)}_{T+1} - \theta x^{(i)}_T\Vert^2.
\label{eqn:OLS}
\end{equation}

\textbf{Step 3: Boosting.} In the last step, we fuse the ``weak" estimates obtained from each bucket to create a more powerful estimator for $A$. Specifically, we leverage the notion of a geometric median for matrices to construct $\hat{A}$ as follows:
\begin{equation}
\hat{A}= \texttt{Med}(\hat{A}_1, \ldots, \hat{A}_K) := \argmin_{\theta \in \mathbb{R}^{d \times d}} \sum_{j \in [K]} \Vert \theta - \hat{A}_j \Vert_F.
\label{eqn:GM}
\end{equation}
In the above step, the geometric median $\hat{A}$ is computed with respect to the Frobenius norm since it induces an inner-product space on the space of all matrices in $\mathbb{R}^{d \times d}.$ In turn, this guarantees the existence of $\hat{A}$ as defined in~\eqref{eqn:GM}~\cite{minsker2015geometric}. At a high level, we note that our algorithmic strategy is inspired by the popular ``median of means" device from robust statistics. While such ideas have been explored in the past for robust mean estimation, we employ them here for the first time in the context of linear system identification. In the subsequent sections, we will discuss the performance guarantees of \texttt{Robust-SysID}, and sketch out the main steps in the analysis, while highlighting the challenges that arise in the way. 

\section{Warm Up: The Scalar Case}
\label{sec:scalar}
We start by analyzing a scalar version of the system in  \eqref{eqn:sys_model} since it captures much of the challenges posed by heavy-tailed noise. Therefore, analyzing \texttt{Robust-SysID} for this case provides us with valuable insights into the nature of the bounds to be expected in the more challenging vector setting. Our main result on the performance of \texttt{Robust-SysID} for the scalar case is stated in the following theorem,  where we define $a$ to be the scalar counterpart of $A$ from \eqref{eqn:sys_model}, and $g_T = \sum_{t=0}^{T-1} a^{2t}$.

\begin{theorem} \label{thm:scalar_case} 
Consider the scalar version of the system in \eqref{eqn:sys_model} and the noise assumptions in \eqref{eqn:noise_model}. With probability at least $1-\delta$, the following bound holds for the output $\hat{a}$ of \texttt{Robust-SysID}:
\begin{equation}
    \abs{\hat{a} - a} \leq C \sqrt{\frac{\log(1/\delta)}{N g_T}}, \hspace{2mm} \textrm{provided} \hspace{1mm} K = \ceil{c_1 \log(1/\delta)}, M \geq c_2 (\tilde{\sigma}^4/\sigma^4), \hspace{1mm} \textrm{and} \hspace{1mm} N=MK.
    \label{eqn:scalar_err_bound}
\end{equation}
\end{theorem}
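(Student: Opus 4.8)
The plan is to combine a per-bucket variance analysis (via Chebyshev, essentially the only concentration tool available under a mere fourth-moment assumption) with a median-of-means boosting argument. First I would write down the closed form of the scalar OLS estimator. Since $x^{(i)}_{T+1} = a\,x^{(i)}_T + w^{(i)}_T$ and $x^{(i)}_0 = 0$ gives $x^{(i)}_T = \sum_{k=0}^{T-1} a^k w^{(i)}_{T-1-k}$, the error of the $j$-th bucket estimator is
\[
\hat{a}_j - a = \frac{\sum_{i \in \mc{B}_j} w^{(i)}_T x^{(i)}_T}{\sum_{i \in \mc{B}_j} (x^{(i)}_T)^2}.
\]
Two structural facts drive the analysis: $\mathbb{E}[(x^{(i)}_T)^2] = \sigma^2 g_T$, and $w^{(i)}_T$ is independent of $x^{(i)}_T$ (the latter depends only on $w^{(i)}_0,\dots,w^{(i)}_{T-1}$). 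The goal of this first stage is to show that each $\hat{a}_j$ is \emph{weakly concentrated}: $\abs{\hat{a}_j - a} \le C'/\sqrt{M g_T}$ with a fixed constant probability (say $\ge 2/3$), independently across the $K$ buckets.

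To establish the weak per-bucket bound I would control numerator and denominator separately. The numerator has zero mean, and by independence of $w^{(i)}_T$ and $x^{(i)}_T$ its variance is exactly $M\sigma^4 g_T$; Chebyshev then bounds it by $O(\sigma^2\sqrt{M g_T})$ with probability $\ge 5/6$. The denominator $\sum_{i} (x^{(i)}_T)^2$ has mean $M\sigma^2 g_T$, and here the fourth moment enters decisively: expanding $\mathbb{E}[(x^{(i)}_T)^4]$ for a weighted sum of i.i.d.\ noise terms gives $\Var((x^{(i)}_T)^2) \le C\tilde{\sigma}^4 g_T^2$, whence $\Var(\sum_i (x^{(i)}_T)^2) \le C M \tilde{\sigma}^4 g_T^2$. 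A Chebyshev bound then keeps the denominator above $\tfrac12 M\sigma^2 g_T$ with probability $\ge 5/6$, \emph{provided} $M \gtrsim \tilde{\sigma}^4/\sigma^4$ --- this is precisely where the kurtosis-dependent bucket size $M \ge c_2(\tilde{\sigma}^4/\sigma^4)$ originates. Intersecting the two events and taking the ratio yields $\abs{\hat{a}_j - a} \le C'/\sqrt{M g_T}$ with probability at least $2/3$.

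Next I would carry out the boosting step. Setting $\epsilon_0 := C'/\sqrt{M g_T}$, I invoke the defining property of the (one-dimensional, Frobenius-norm) geometric median: if $\abs{\hat{a} - a} > \epsilon_0$, then at least half of the $\hat{a}_j$ must lie outside $[a-\epsilon_0, a+\epsilon_0]$. Letting $Z_j$ be the indicator that bucket $j$ fails the weak bound, the $Z_j$ are independent with $\mathbb{E}[Z_j] \le 1/3$, so a Hoeffding tail bound gives $\mathbb{P}(\sum_j Z_j \ge K/2) \le e^{-K/18}$, which is at most $\delta$ once $K = \ceil{c_1 \log(1/\delta)}$. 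Hence $\abs{\hat{a} - a} \le \epsilon_0$ with probability at least $1-\delta$. Finally, substituting $M = N/K \asymp N/(c_1\log(1/\delta))$ into $\epsilon_0$ turns the $1/\sqrt{M g_T}$ rate into $C\sqrt{\log(1/\delta)/(N g_T)}$, which is exactly \eqref{eqn:scalar_err_bound}.

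I expect the denominator lower bound to be the main obstacle. Under heavy tails, $\sum_i (x^{(i)}_T)^2$ is the quantity most easily distorted by a handful of large noise realizations, and because it sits in the denominator its downward fluctuations translate directly into estimation error. With only a fourth moment at hand, Chebyshev is essentially the sole concentration tool, which is exactly why each bucket can only be guaranteed with constant probability and why a median-of-means device --- rather than a direct high-probability concentration argument --- is indispensable here. Some care is also needed in propagating the ratio $\tilde{\sigma}^4/\sigma^4$ cleanly through the fourth-moment computation for $(x^{(i)}_T)^2$, as this ratio is the only quantitative footprint that heavy-tailedness leaves on the final sample complexity.
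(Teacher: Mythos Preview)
Your proposal is correct and mirrors the paper's proof almost exactly: the paper likewise bounds the numerator via Chebyshev (Lemma~\ref{lemma:scalar_num}), lower-bounds the denominator by computing $\E[x_T^4]\le 3g_T^2\tilde{\sigma}^4$ and applying Chebyshev (Lemma~\ref{lemma:scalar_den}), which is where the kurtosis condition on $M$ arises, and then boosts via the median and Hoeffding. The only differences are cosmetic choices of constants (the paper takes per-bucket failure probability $p=1/4$ rather than $1/3$, yielding $\exp(-K/8)$ instead of $\exp(-K/18)$).
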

Before providing a proof sketch of the above result, some remarks are in order.

\noindent \textbf{Discussion.} We note that the error bound in \eqref{eqn:scalar_err_bound} matches the one obtained by the standard OLS estimator under Gaussian noise, as indicated in \eqref{eqn:benchmark}. This reveals the robustness of our algorithm to a general heavy-tailed noise process. Theorem~\ref{thm:scalar_case} also specifies the design parameters of \texttt{Robust-SysID}, namely the number of buckets $K$, and the number of samples per bucket $M$. Since $N=MK$, we note from \eqref{eqn:scalar_err_bound} that the number of trajectories depends on the kurtosis of the noise process - a dependence not observed under Gaussian noise. Although we are uncertain whether a dependence on the kurtosis is inevitable, it is, however, meaningful as it captures the heaviness of the tail. 

\textbf{Proof Sketch for Theorem~\ref{thm:scalar_case}.} In what follows, we sketch the main ideas in the proof of Theorem~\ref{thm:scalar_case}. The detailed proof is deferred to Appendix~\ref{app:scalar_proof}. The main hurdle in our analysis is that we can no longer leverage concentration bounds for sub-Gaussian and sub-exponential distributions that have appeared in prior works. Nonetheless, we start by deriving bounds for the OLS estimator from each bucket. Accordingly, the OLS estimator $\hat{a}_j$ for the $j$th bucket can be expressed as 
\begin{equation}
   \hat{a}_j = a + \frac{ \sum_{i \in \mc{B}_j} x_T^{(i)}w_T^{(i)}}{ \sum_{i \in \mc{B}_j} \big( x_T^{(i)}\big)^2}.
   \label{eqn:ols_scalar}
\end{equation}
We bound the numerator and the denominator of the error term separately  and then combine them by applying an union bound. In this regard, the following lemmas provide key results. 

\begin{lemma} \label{lemma:scalar_num} (\textbf{Scalar numerator upper bound})  Fix a bucket $j \in [K]$. With probability at least $1 - p/2$, the following holds:
$$\abslr{\sum_{i \in \mc{B}_j} x_T^{(i)}w_T^{(i)}} \leq c\sigma^2\sqrt{g_T M/p}.$$ 
\end{lemma}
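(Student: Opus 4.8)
The plan is to recognize the numerator as a sum of $M$ independent, zero-mean random variables whose variance can be computed exactly from the second-moment assumption in~\eqref{eqn:noise_model}, and then to invoke Chebyshev's inequality. The crucial structural observation is that, since $x_0 = 0$, unrolling the scalar dynamics gives $x_T^{(i)} = \sum_{s=0}^{T-1} a^{T-1-s} w_s^{(i)}$, so $x_T^{(i)}$ is a function of $\{w_s^{(i)}\}_{s < T}$ alone and is therefore independent of the time-$T$ noise $w_T^{(i)}$. Consequently each product $x_T^{(i)} w_T^{(i)}$ has mean $\E[x_T^{(i)}]\,\E[w_T^{(i)}] = 0$, and across the bucket these products are i.i.d., as the $N$ trajectories are independent.

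First I would compute the per-term variance. Using the independence of $x_T^{(i)}$ and $w_T^{(i)}$ together with the orthogonality of distinct noise samples, $\E[(x_T^{(i)})^2] = \sigma^2 \sum_{k=0}^{T-1} a^{2k} = \sigma^2 g_T$, so that $\E[(x_T^{(i)} w_T^{(i)})^2] = \E[(x_T^{(i)})^2]\,\E[(w_T^{(i)})^2] = \sigma^4 g_T$. Summing the $M$ independent terms in the bucket yields $\Var\!\big(\sum_{i \in \mc{B}_j} x_T^{(i)} w_T^{(i)}\big) = M \sigma^4 g_T$.

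Next I would apply Chebyshev's inequality to the zero-mean sum $S = \sum_{i \in \mc{B}_j} x_T^{(i)} w_T^{(i)}$: for any $\lambda > 0$, $\mathbb{P}(|S| \geq \lambda) \leq M \sigma^4 g_T / \lambda^2$. Choosing $\lambda = \sigma^2 \sqrt{2 M g_T / p}$ makes the right-hand side equal to $p/2$, which gives $\abslr{S} \leq c \sigma^2 \sqrt{g_T M / p}$ with probability at least $1 - p/2$ and $c = \sqrt{2}$, as claimed.

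I would emphasize that there is no real technical obstacle in this lemma, precisely because only second moments enter: the fourth-moment assumption plays no role in bounding the numerator and is instead reserved for the denominator, where one must control the fluctuations of $\sum_{i \in \mc{B}_j} (x_T^{(i)})^2$ around its mean $M\sigma^2 g_T$. The more important conceptual point is the qualitative nature of the bound: Chebyshev delivers only the weak $\sqrt{1/p}$ (polynomial) tail rather than the exponential $\log(1/\delta)$ tail available under sub-Gaussian noise. This weak per-bucket concentration is exactly why the bucketing and geometric-median boosting steps of \texttt{Robust-SysID} are needed to recover the desired $\log(1/\delta)$ dependence in Theorem~\ref{thm:scalar_case}.
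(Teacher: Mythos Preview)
Your proposal is correct and follows essentially the same approach as the paper: compute the per-term variance $\sigma^4 g_T$ using the independence of $x_T^{(i)}$ and $w_T^{(i)}$, sum over the $M$ i.i.d.\ terms in the bucket, and apply Chebyshev's inequality with the threshold chosen so that the tail probability equals $p/2$. Your additional remarks about only second moments being needed here and the polynomial tail motivating the boosting step are also consistent with the paper's discussion.
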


To prove the above result, one can start by noting that due to the i.i.d nature of the trajectories, $\Var(\sum_{i \in \mc{B}_j} x_T^{(i)}w_T^{(i)}) = M \Var(x_T^{(1)}w_T^{(1)})$, where $M = \abs{\mc{B}_j}$, and $\Var(Z)$ is used to represent the variance of a real-valued random variable $Z$. Next, observe that for each individual term, $\Var(x_T^{(1)}w_T^{(1)}) = \mathbb{E}[(x_T^{(1)})^2] \mathbb{E}[(w_T^{(1)})^2] = \sigma^2 g_T \times \sigma^2 = \sigma^4 g_T$; here, we exploited the fact that $x_T^{(1)}$ and $w_T^{(1)}$ are independent, and $\mathbb{E}[x_T^{(1)}] = \mathbb{E}[w_T^{(1)}]=0$. The rest follows from a straightforward application of Chebyshev's inequality. Next, the following lemma provides a lower bound on $\sum_{i \in \mc{B}_j} \big( x_T^{(i)}\big)^2$, by exploiting the existence of the fourth moment of $w^{(i)}_T$.

\begin{lemma} \label{lemma:scalar_den} (\textbf{Scalar denominator lower bound}) Fix a bucket $j \in [K]$. With probability at least $1 - p/2$, the following holds: 
$$\sum_{i \in \mc{B}_j} \big( x_T^{(i)}\big)^2 \geq \sigma^2 g_T M/2, \hspace{2mm} \textrm{provided} \hspace{2mm}  M \geq (c/p) (\tilde{\sigma}^4/\sigma^4). $$ 
\end{lemma}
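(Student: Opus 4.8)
## Proof Proposal for Lemma 2 (Scalar Denominator Lower Bound)

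\textbf{The plan is to} establish the lower bound by analyzing the sum $S_j := \sum_{i \in \mc{B}_j} (x_T^{(i)})^2$ through its first and second moments, and then applying Chebyshev's inequality in the appropriate one-sided direction. First I would compute the mean of $S_j$. Since the trajectories are i.i.d.\ and $x_T^{(i)} = \sum_{s=0}^{T-1} a^{s} w_{T-1-s}^{(i)}$ (using $x_0 = 0$), each term satisfies $\mathbb{E}[(x_T^{(i)})^2] = \sigma^2 \sum_{t=0}^{T-1} a^{2t} = \sigma^2 g_T$, so that $\mathbb{E}[S_j] = \sigma^2 g_T M$. This identifies the target: I want to show $S_j$ does not fall below half its mean, i.e.\ $S_j \geq \mathbb{E}[S_j]/2$.

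\textbf{Next I would} control the fluctuations of $S_j$ around its mean via its variance. By independence across trajectories, $\Var(S_j) = M \Var\big((x_T^{(1)})^2\big) \leq M\, \mathbb{E}\big[(x_T^{(1)})^4\big]$. The crux is to bound the fourth moment $\mathbb{E}[(x_T^{(1)})^4]$ using only the assumed second- and fourth-moment data on the noise in \eqref{eqn:noise_model}. Expanding $(x_T^{(1)})^4 = \big(\sum_{s} a^{s} w^{(1)}_{T-1-s}\big)^4$ and using independence and zero-mean of the noise, the only surviving terms are those where each noise index appears an even number of times: the ``all-four-equal'' terms contribute $\tilde{\sigma}^4 \sum_s a^{4s}$, and the ``two-pairs'' terms contribute a multiple of $\sigma^4 \big(\sum_s a^{2s}\big)^2 = \sigma^4 g_T^2$ (with a combinatorial constant of order $3$). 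Since $\sum_s a^{4s} \leq (\sum_s a^{2s})^2 = g_T^2$, I expect a clean bound of the form $\mathbb{E}[(x_T^{(1)})^4] \leq C\, \tilde{\sigma}^4 g_T^2$ for a universal constant $C$, yielding $\Var(S_j) \leq C\, \tilde{\sigma}^4 g_T^2 M$.

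\textbf{Finally I would} combine these via the one-sided Chebyshev inequality:
\begin{equation}
\mathbb{P}\Big( S_j < \frac{\mathbb{E}[S_j]}{2} \Big) \leq \mathbb{P}\Big( |S_j - \mathbb{E}[S_j]| > \frac{\mathbb{E}[S_j]}{2} \Big) \leq \frac{4\, \Var(S_j)}{(\mathbb{E}[S_j])^2} \leq \frac{4 C\, \tilde{\sigma}^4 g_T^2 M}{\sigma^4 g_T^2 M^2} = \frac{4C}{M}\cdot \frac{\tilde{\sigma}^4}{\sigma^4}.
\end{equation}
Requiring this to be at most $p/2$ gives exactly the stated condition $M \geq (c/p)(\tilde{\sigma}^4/\sigma^4)$ with $c = 8C$, which completes the argument.

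\textbf{The main obstacle} will be the fourth-moment computation for $x_T^{(1)}$. This is precisely where the heavy-tailed setting bites: under sub-Gaussian noise one would simply invoke a concentration inequality, but here I must bound $\mathbb{E}[(x_T^{(1)})^4]$ by hand, tracking the cross terms in the multinomial expansion and verifying that the kurtosis ratio $\tilde{\sigma}^4/\sigma^4$ is the \emph{only} distributional quantity that enters. The appearance of this ratio in the variance bound is what ultimately forces $M$ to scale with the kurtosis, and it explains the origin of the kurtosis dependence flagged in the discussion following Theorem~\ref{thm:scalar_case}. Care is needed to ensure the bound $\sum_s a^{4s} \leq g_T^2$ is used correctly so that the geometric series $g_T$ factors out cleanly and cancels against $(\mathbb{E}[S_j])^2$ in the denominator.
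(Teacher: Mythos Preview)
Your proposal is correct and follows essentially the same approach as the paper: compute $\mathbb{E}[S_j]=\sigma^2 g_T M$, bound $\Var(S_j)\le M\,\mathbb{E}[x_T^4]$ by expanding the fourth moment and using $\sum_s a^{4s}\le g_T^2$ together with $\sigma^4\le\tilde{\sigma}^4$ to obtain $\mathbb{E}[x_T^4]\le 3\tilde{\sigma}^4 g_T^2$, and then apply Chebyshev with deviation $\mathbb{E}[S_j]/2$. The only cosmetic difference is that the paper first solves Chebyshev for the deviation $t$ and then imposes $t\le \sigma^2 g_T M/2$, whereas you substitute $t=\mathbb{E}[S_j]/2$ directly; the resulting condition on $M$ is identical.
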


\begin{proof} \label{proof:scalar_den} Due to the i.i.d nature of the trajectories, we have $\Var\big(\sum_{i \in \mc{B}_j} \big( x_T^{(i)}\big)^2\big) = M \Var\big(\big( x_T^{(1)}\big)^2\big).$ For clarity of notation, let us drop the superscript in the rest of the proof. Since $\Var\big( x_T^2\big) \leq \E\left[x_T^4\right],$ it suffices to bound the fourth moment of $x_T$. Under the zero initial condition, observe that $ x_T = \sum_{t = 0}^{T-1} a^tn_t$, where we have defined $n_t \triangleq w_{T - (t+1)}$ for brevity. The fourth moment of $x_T$ can be expressed as follows:
\begin{equation*}
    \begin{aligned}
    \E[x_T^4] &= \E\left[ \left(\sum_{t = 0}^{T-1} a^tn_t\right)^2 \left(\sum_{s = 0}^{T-1} a^sn_s\right)^2\right] \\
    & = \E\left[\left( \underbrace{\sum_{t = 0}^{T-1} a^{2t}n_t^2}_{T_1} + \underbrace{\sum_{t' \neq t = 0}^{T-1}a^{t'}a^tn_tn_{t'}}_{T_2}\right) \left( \underbrace{\sum_{s = 0}^{T-1} a^{2s}n_s^2}_{T_3} + \underbrace{\sum_{s' \neq s = 0}^{T-1}a^{s'}a^sn_sn_{s'}}_{T_4}\right)\right].
    \end{aligned}
\end{equation*}
In the above display, as a result of the noise process being i.i.d. with zero mean, only the terms that contribute either a fourth power or a product of squares survive the expectation, causing the cross terms $T_1 \times T_4$ and $T_2 \times T_3$ to vanish. Furthermore, in $T_2 \times T_4$, only terms of the form $n_t^2n_{t'}^2$ survive, yielding
\begin{equation*}
\begin{aligned}
    \E[x_T^4] = \sum_{t=0}^{T-1} a^{4t} \E[n_t^4] + 3\sum_{t' \neq t = 0}^{T-1}a^{2(t+t')}\E[n_t^2n_{t'}^2] 
    = \sum_{t=0}^{T-1} a^{4t} \tilde{\sigma}^4 + 3\sum_{t' \neq t = 0}^{T-1}a^{2(t+t')}\sigma^4 
     \overset{(a)} \leq 3 (g_T)^2 \tilde{\sigma}^4.
\end{aligned}
\end{equation*}
In the above steps, (a) follows from the definition of $g_T$ in Theorem~\ref{thm:scalar_case}, and $\tilde{\sigma}^4 \geq \sigma^4$ due to Jensen's inequality. Now applying Chebyshev's bound with the above result, we have for any $t > 0:$
\begin{equation*}
    \mathbb{P} \left( \left|\sum_{i \in \mc{B}_j} \big(x^{(i)}_T\big)^2 - \E\left[\sum_{i \in \mc{B}_j} \big(x^{(i)}_T\big)^2 \right] \right| \geq t \right) \leq \frac{3M(g_T)^2 \tilde{\sigma}^4}{t^2}.
\end{equation*}
Setting the R.H.S. of the above inequality to $p/2$, we get $t =  g_T\tilde{\sigma}^2\sqrt{6M/p}$. Notice that $\E\left[\sum_{i \in \mc{B}_j} \big(x^{(i)}_T\big)^2 \right] = Mg_T\sigma^2$, giving us the following with probability at least $ 1 - p/2$:
\begin{equation*}
     \sum_{i \in \mc{B}_j} \big(x^{(i)}_T\big)^2 \geq g_T\left( \sigma^2 M - \tilde{\sigma}^2\sqrt{6M/p} \right).
\end{equation*}
In the above display, setting the R.H.S $ \geq g_T\sigma^2 M/2$ and solving for $M$ completes the proof.
\end{proof}
Combining the results from Lemma~\ref{lemma:scalar_num} and Lemma~\ref{lemma:scalar_den}, and using an union bound, we have that when $M \geq (c_1/p) (\tilde{\sigma}^4/\sigma^4)$, the following holds with probability at least $1 - p$:
\begin{equation}
    \abs{\hat{a}_j - a} \leq c_2\sqrt{\frac{1}{p M g_T}}.
    \label{eqn:scalar_bound_pre_boost}
\end{equation}
Note that the failure probability $p$ appears polynomially (and not logarithmically) in the above bound. 
\newpage
\textbf{The Role of Boosting.} In \eqref{eqn:scalar_bound_pre_boost}, set $p=1/4$, and let $\varepsilon = c_2({p M g_T})^{-1/2}.$ In the scalar case, note that $\hat{a}$ in \eqref{eqn:GM} is simply the standard (scalar) median of $\{\hat{a}_1, \ldots, \hat{a}_K\}.$ By the property of the median, observe that the ``bad" event $\{ | \hat{a}-a| > \varepsilon\}$ implies $\{ \sum_{j \in [K]} Y_j \geq K/2 \}$, where $Y_j$ is an indicator random variable of the event  $\{ | \hat{a}_j-a| > \varepsilon\}$. Using the fact that each of the $Y_j$'s are i.i.d. random variables in $\{0,1\}$  satisfying $\mathbb{E}[Y_j] \leq p = 1/4$, we can use Hoeffding's inequality to infer that 
$$ \mathbb{P}\left(\{ | \hat{a}-a| > \varepsilon\}\right) \leq \exp(-K/8) \leq \delta,$$
when $K=\ceil{8 \log(1/\delta)}.$ Using this expression for $K$ in $\varepsilon= c_2({p M g_T})^{-1/2}$ and noting that $M=N/K$, we arrive at the bound in~\eqref{eqn:scalar_err_bound}. This completes the proof sketch for Theorem~\ref{thm:scalar_case}. In simple words, for the median estimate $\hat{a}$ to deviate from $a$ beyond our desired error tolerance $\varepsilon$, at least half of the OLS estimates from the buckets must also deviate by $\varepsilon$. Although the failure probability of each one of such (independent) events is at most $1/4$, asking $K/2$ of such events to occur \emph{simultaneously} diminishes the overall failure probability, thereby ``boosting" the quality of $\hat{a}$. With this intuition in mind, we now proceed to analyze the vector case in the following section. 

\section{The Vector Case}
In this section, we demonstrate how \texttt{Robust-SysID} addresses Problem~\ref{prob:robust_sys_id}. We also discuss some of the unique challenges posed by the vector setting compared to the scalar case from the previous section. The following theorem captures our main result. 
\begin{theorem} \label{thm:vector_case} (\textbf{Main Result}) 
Consider the system in \eqref{eqn:sys_model} and the noise assumptions in \eqref{eqn:noise_model}. With probability at least $1-\delta$, the following bound holds for the output $\hat{A}$ of \texttt{Robust-SysID}:
\begin{equation}
    \norm{\hat{A} - A} \leq C d^{3/2}\sqrt{\frac{\log(1/\delta)}{N \lambda_{\min}(G_T)}}, \hspace{1mm} \textrm{when} \hspace{1mm} K = \ceil{c_1 \log(1/\delta)}, M \geq c_2 d^2 C_A C_w, \hspace{1mm} \textrm{and} \hspace{1mm} N=MK,
    \label{eqn:vector_err_bound}
\end{equation}
\begin{equation}
 \textrm{where} \hspace{2mm} {C_A} \triangleq \left(\frac{\sum_{t=0}^{T-1} \norm{A^t}^2}{\lambda_{\min}(G_T)}\right)^2, \hspace{2mm} \textrm{and} \hspace{2mm} C_w \triangleq \frac{\tilde{\sigma}^4}{\sigma^4}.
    \label{eqn:vector_sample_comp}
\end{equation}
\end{theorem}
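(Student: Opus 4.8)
\textbf{Proof proposal for Theorem~\ref{thm:vector_case}.} The plan is to reuse the three-stage template behind the scalar result: control each per-bucket OLS error at a constant failure probability, then amplify the confidence through the geometric-median step \eqref{eqn:GM}. Solving the least-squares problem \eqref{eqn:OLS} in closed form and substituting the dynamics \eqref{eqn:sys_model} yields the matrix analogue of \eqref{eqn:ols_scalar}, namely $\hat{A}_j - A = W_j V_j^{-1}$, where $W_j \triangleq \sum_{i \in \mc{B}_j} w_T^{(i)} (x_T^{(i)})^{\top}$ and $V_j \triangleq \sum_{i \in \mc{B}_j} x_T^{(i)} (x_T^{(i)})^{\top}$. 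The mixed Frobenius--operator submultiplicativity $\norm{XY}_F \leq \norm{X}_F \Vert Y \Vert$ then gives $\norm{\hat{A}_j - A}_F \leq \norm{W_j}_F/\lambda_{\min}(V_j)$, so I would bound a numerator and a denominator separately and combine them by a union bound, exactly as in the chain leading to \eqref{eqn:scalar_bound_pre_boost}.

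For the numerator --- the vector analogue of Lemma~\ref{lemma:scalar_num} --- I would apply Chebyshev/Markov to $\E[\norm{W_j}_F^2]$. Because the trajectories are i.i.d.\ and $w_T^{(i)}$ is independent of $x_T^{(i)}$ with both mean zero, the cross terms vanish and $\E[\norm{W_j}_F^2] = M\,\E[\norm{w_T}^2]\,\E[\norm{x_T}^2] = M d\sigma^2 \cdot \sigma^2 \tr{G_T}$, using $\E[\norm{w_T}^2] = d\sigma^2$ and $\E[x_T x_T^{\top}] = \sigma^2 G_T$. This delivers $\norm{W_j}_F \lesssim \sigma^2\sqrt{d\,\tr{G_T}\,M/p}$ with probability at least $1-p/2$. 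The dimensional factor $\E[\norm{w_T}^2]=d\sigma^2$ together with the later passage from the spectral to the Frobenius norm is what will account for the extra multiplicative $O(d)$ in \eqref{eqn:vector_err_bound} relative to the benchmark \eqref{eqn:benchmark}.

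The denominator --- the vector analogue of Lemma~\ref{lemma:scalar_den} --- is the \textbf{main obstacle}, since it requires controlling the smallest eigenvalue of the empirical covariance under only a fourth-moment assumption, where matrix concentration tools are unavailable. I would use the perturbation inequality $\lambda_{\min}(V_j) \geq \lambda_{\min}(\E[V_j]) - \Vert V_j - \E[V_j]\Vert$ with $\E[V_j] = M\sigma^2 G_T$, so that it suffices to show $\Vert V_j - \E[V_j]\Vert \leq \tfrac{1}{2}M\sigma^2\lambda_{\min}(G_T)$ with high probability. Bounding the spectral norm by the Frobenius norm and invoking Chebyshev reduces this to $\E[\norm{V_j - \E[V_j]}_F^2] = M\,\Var_F(x_T x_T^{\top}) \leq M\,\E[\norm{x_T}^4]$, so the crux is a fourth-moment bound on $\norm{x_T}$. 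Writing $x_T = \sum_{t=0}^{T-1} A^t n_t$ with $n_t \triangleq w_{T-(t+1)}$, expanding $\norm{x_T}^4 = (\sum_k x_T(k)^2)^2$ coordinatewise, and keeping only the index patterns that survive the zero-mean i.i.d.\ structure (handling the cross-coordinate terms via $\E[w(i)^2 w(j)^2] \leq \tilde{\sigma}^4$ by Cauchy--Schwarz), I expect an estimate of the form $\E[\norm{x_T}^4] \lesssim \tilde{\sigma}^4 (\tr{G_T})^2$, with $\tilde{\sigma}^4$ entering precisely as in the proof of Lemma~\ref{lemma:scalar_den}. Requiring the deviation to stay below $\tfrac12 M\sigma^2\lambda_{\min}(G_T)$ then forces $M \gtrsim \tilde{\sigma}^4(\tr{G_T})^2/(p\,\sigma^4\lambda_{\min}(G_T)^2)$, and bounding $\tr{G_T} \leq d\sum_{t=0}^{T-1}\norm{A^t}^2$ recovers exactly the condition $M \geq c_2 d^2 C_A C_w$ with $C_A, C_w$ as in \eqref{eqn:vector_sample_comp}. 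I anticipate the quartic-form bookkeeping to be the most delicate calculation.

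Combining the two bounds under a union bound gives, for $M \geq c_2 d^2 C_A C_w$ and any fixed $p$, a per-bucket guarantee $\norm{\hat{A}_j - A}_F \lesssim \sqrt{d\,\tr{G_T}}/(\sqrt{M}\,\lambda_{\min}(G_T))$ with probability at least $1-p$; bounding the trace term collapses this to order $d^{3/2}/\sqrt{M\lambda_{\min}(G_T)}$, the analogue of \eqref{eqn:scalar_bound_pre_boost}. For boosting, the scalar median argument no longer applies, so I would invoke the defining property of the Frobenius geometric median \eqref{eqn:GM} established in \cite{minsker2015geometric}: if $\norm{\hat{A}-A}_F$ exceeds a fixed constant multiple of the per-bucket radius $\varepsilon$, then more than a constant fraction $\alpha$ of the buckets must satisfy $\norm{\hat{A}_j - A}_F > \varepsilon$. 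Choosing $p < \alpha$ so that the failure indicators $Y_j$ obey $\E[Y_j] \leq p$, Hoeffding's inequality makes the probability that $\alpha K$ independent buckets fail decay like $\exp(-cK)$, which is at most $\delta$ once $K = \ceil{c_1 \log(1/\delta)}$. Substituting $M = N/K$ into the per-bucket radius and using $\Vert \hat{A}-A\Vert \leq \norm{\hat{A}-A}_F$ then yields \eqref{eqn:vector_err_bound}, completing the argument.
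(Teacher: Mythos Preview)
Your overall architecture mirrors the paper's, but there is a genuine gap at the step where you claim that ``bounding the trace term collapses'' the per-bucket Frobenius bound $\sqrt{d\,\tr{G_T}}\big/(\sqrt{M}\,\lambda_{\min}(G_T))$ to $d^{3/2}/\sqrt{M\lambda_{\min}(G_T)}$. That collapse would require $\tr{G_T}\leq d^{2}\lambda_{\min}(G_T)$, i.e., a condition number of $G_T$ bounded by $d$; this is false in general (take $A=\operatorname{diag}(2,0)$, for which $\lambda_{\max}(G_T)/\lambda_{\min}(G_T)$ grows like $4^{T}$). Without it, your final error bound carries $\sqrt{\tr{G_T}}/\lambda_{\min}(G_T)$ in place of the theorem's $\sqrt{d/\lambda_{\min}(G_T)}$, so \eqref{eqn:vector_err_bound} is not established as stated. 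The culprit is the crude decoupling $\norm{W_jV_j^{-1}}_F\leq \norm{W_j}_F/\lambda_{\min}(V_j)$: the numerator $\norm{W_j}_F$ is driven by the \emph{large} eigen-directions of $\Sigma_x=\sigma^2G_T$ through $\E[\norm{x_T}^2]=\sigma^2\tr{G_T}$, while the denominator is driven by the \emph{small} ones, so the ratio inherits the condition number.

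The paper removes this coupling by whitening before splitting: with $z_T^{(i)}\triangleq\Sigma_x^{-1/2}x_T^{(i)}$ one has $\hat{A}_j-A=\big(\sum_{i}w_T^{(i)}(z_T^{(i)})^{\top}\big)\big(\sum_{i}z_T^{(i)}(z_T^{(i)})^{\top}\big)^{-1}\Sigma_x^{-1/2}$. Now the numerator bound (Lemma~\ref{lemma:vector_num}) involves only $\E[\norm{z_T}^2]=d$, the whitened empirical covariance concentrates around $MI$ (Lemma~\ref{lemma:vector_den}, which is exactly where your fourth-moment computation on $\E[\norm{x_T}^4]=\tr{\E[(x_Tx_T^\top)^2]}$ is used), and all $G_T$-dependence is isolated in the single factor $\norm{\Sigma_x^{-1/2}}=1/\sqrt{\sigma^2\lambda_{\min}(G_T)}$. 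This gives the per-bucket \emph{spectral} bound $\norm{\hat{A}_j-A}\leq c\,d/\sqrt{pM\lambda_{\min}(G_T)}$; the remaining $\sqrt{d}$ in \eqref{eqn:vector_err_bound} then enters only when converting to the Frobenius norm for the geometric-median step. Aside from this whitening issue, your denominator strategy, the resulting $M\geq c_2 d^2C_AC_w$ requirement, and the boosting via \cite{minsker2015geometric} plus Hoeffding are essentially the paper's argument.
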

The detailed proof of the above result is provided in Appendix~\ref{app:vector_proof}. Before sketching the main ingredients in the analysis, we discuss the implications of Theorem~\ref{thm:vector_case}. 

\textbf{Discussion.} Comparing \eqref{eqn:benchmark} and \eqref{eqn:vector_err_bound}, we note that \texttt{Robust-SysID} recovers the logarithmic dependence on the failure probability. To our knowledge, this is the first result to provide such a guarantee for the general heavy-tailed noise model considered in this work. That said, we note from \eqref{eqn:vector_err_bound} that our error bound, and the requirement on the number of trajectories, both suffer from an extra multiplicative factor of $O(d)$ relative to the Gaussian benchmark in~\eqref{eqn:benchmark}. Furthermore, unlike the scalar case in~\eqref{eqn:scalar_err_bound}, the requirement on the number of trajectories in the vector case exhibits an additional dependency on the system matrix $A$ via the constant $C_A$ in \eqref{eqn:vector_sample_comp}. 

\textbf{Proof Sketch for Theorem~\ref{thm:vector_case}.} Analogous to the scalar case, the proof of Theorem~\ref{thm:vector_case} first involves deriving bounds for the OLS estimators of each bucket. To simplify our analysis, we whiten the vector $x_T^{(i)}$ and define $z_T^{(i)} = \Sigma_x^{-1/2}x_T^{(i)}$, where $\Sigma_x = \E[x_T^{(i)}\big(x_T^{(i)}\big)^{\top}]$. Under this definition, it suffices to individually bound $\normlr{\sum_{i \in \mc{B}_j} w_T^{(i)}\big(z_T^{(i)}\big)^{\top}}$ and $\lambda_{\min}{\left(\sum_{i \in \mc{B}_j} z_T^{(i)}\big(z_T^{(i)}\big)^{\top}\right)}$ as shown in \cite{dean2020sample}, \cite{matni2019tutorial}. The following lemmas provide key results in this regard.

\begin{lemma} \label{lemma:vector_num} (\textbf{Vector numerator upper bound}) Fix a bucket $j \in [K]$. With probability at least $1 - p/2$, the following holds:
$$\normlr{\sum_{i \in \mc{B}_j} w_T^{(i)}\big(z_T^{(i)}\big)^{\top}} \leq c_1 d \sqrt{\sigma^2 M/p}.$$
\end{lemma}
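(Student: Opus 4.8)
The plan is to bypass the sub-Gaussian matrix-concentration machinery (e.g., matrix Bernstein) that is unavailable under our weak moment assumptions, and instead rely on a simple second-moment argument. Writing $S_j = \sum_{i \in \mc{B}_j} w_T^{(i)}\big(z_T^{(i)}\big)^{\top}$, the first step is to pass from the spectral norm to the Frobenius norm via $\normlr{S_j} \leq \normlr{S_j}_F$, and then to bound $\E[\normlr{S_j}_F^2]$ exactly. A single application of Markov's inequality to the nonnegative random variable $\normlr{S_j}_F^2$ will then yield the claimed high-probability bound.

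The key computation is $\E[\normlr{S_j}_F^2] = \E[\tr{S_j S_j^{\top}}]$, which I would expand as $\sum_{i,k \in \mc{B}_j} \E\big[ \tr{w_T^{(i)} (z_T^{(i)})^{\top} z_T^{(k)} (w_T^{(k)})^{\top}} \big]$. Two independence facts drive the evaluation. First, within a single trajectory, $x_T^{(i)} = \sum_{t=0}^{T-1} A^t w_{T-1-t}^{(i)}$ depends only on the noise up to time $T-1$, so $z_T^{(i)} = \Sigma_x^{-1/2} x_T^{(i)}$ is independent of the current noise $w_T^{(i)}$. Second, distinct trajectories are mutually independent. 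Consequently every cross term with $i \neq k$ vanishes because $\E[w_T^{(i)}]=0$, and each diagonal term reduces to $\E[\normlr{z_T^{(i)}}^2 \normlr{w_T^{(i)}}^2] = \E[\normlr{z_T^{(i)}}^2]\,\E[\normlr{w_T^{(i)}}^2]$. Using the whitening identity $\E[z_T^{(i)}(z_T^{(i)})^{\top}] = I_d$ gives $\E[\normlr{z_T^{(i)}}^2] = \tr{I_d} = d$, while the noise model yields $\E[\normlr{w_T^{(i)}}^2] = \tr{\sigma^2 I_d} = d\sigma^2$. Summing the $M$ diagonal terms gives $\E[\normlr{S_j}_F^2] = M d^2 \sigma^2$.

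With this exact second moment in hand, Markov's inequality gives $\mathbb{P}(\normlr{S_j}^2 \geq t) \leq \mathbb{P}(\normlr{S_j}_F^2 \geq t) \leq M d^2 \sigma^2 / t$; choosing $t = 2 M d^2 \sigma^2 / p$ forces the right-hand side down to $p/2$, and taking square roots delivers $\normlr{S_j} \leq \sqrt{2}\, d \sqrt{\sigma^2 M / p}$, which is the stated bound with $c_1 = \sqrt{2}$.

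I do not anticipate a serious obstacle here: unlike the denominator lemma, the numerator involves only second moments of the noise, so the finite-fourth-moment assumption is not even needed for this step. The one point worth flagging is that the crude inequality $\normlr{S_j} \leq \normlr{S_j}_F$ is precisely what introduces the factor of $d$ (rather than the $\sqrt{d}$ one would obtain from sharp matrix concentration under sub-Gaussian noise); this is the source of the extra multiplicative $O(d)$ that eventually surfaces in Theorem~\ref{thm:vector_case}. Avoiding this loss would require a spectral-norm concentration inequality valid under only second moments, which appears difficult and is exactly what motivates settling for the tractable Frobenius-norm surrogate.
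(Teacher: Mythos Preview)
Your proposal is correct and essentially identical to the paper's proof: both pass to the Frobenius norm, compute $\E[\norm{S_j}_F^2] = M d^2 \sigma^2$ by exploiting the independence of $w_T^{(i)}$ from $z_T^{(i)}$ and the vanishing of cross-trajectory terms, and finish with Markov's inequality. The only minor imprecision is in your closing commentary: the paper attributes only a $\sqrt{d}$ loss to this lemma, with the remaining $\sqrt{d}$ of the overall $O(d)$ gap coming from the Frobenius-norm geometric median in the boosting step.
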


\textbf{Challenges in Analysis.} Let us discuss some of the challenges that arise in the proof of the above result by outlining potential proof strategies, and their limitations for our setting. In \cite{dean2020sample}, the authors derive a similar result under Gaussian noise by exploiting variational properties of the spectral norm along with covering arguments. This was made possible due to the availability of sub-Gaussian and sub-exponential tail bounds with logarithmic dependence on the error probability, which, in turn, help in controlling certain covering numbers. Clearly, without the logarithmic factor, as is the case with heavy-tailed noise, using such an approach would lead to a prohibitive exponential dependence on the dimension $d$ due to the covering number. An alternative strategy to bound the norm of $\sum_{i \in \mc{B}_j} w_T^{(i)}\big(z_T^{(i)}\big)^{\top}$ is to bound each scalar entry of this matrix by invoking the analysis from Section~\ref{sec:scalar}. However, this approach fails to provide the bound in Lemma~\ref{lemma:vector_num} as it would involve union bounding over $d^2$ elements, leading to an additional dimension factor. 

\textbf{Our methods.} In light of the above discussion, we identify two different approaches. In the first approach, we define a new variance statistic for a random square matrix $X$ as $\text{var}(X) \triangleq \E[\norm{X - \E[X]}_F^2]$. Such a definition using the Frobenius norm leverages independence in the sense that $\text{var}(\sum_i X_i) = \sum_i \text{var}(X_i)$ for independent matrices $\{X_i\}$. Since $\text{var}(X)$ as defined above is a scalar, one can use the standard Markov's inequality in this case. The second approach exploits a matrix version of Markov's inequality proposed in~\cite{ahlswede2001strong}. It turns out that both approaches lead to exactly the same bounds in Lemma~\ref{lemma:vector_num} and Lemma~\ref{lemma:vector_den}. Our next result controls the smallest eigenvalue of the (whitened) empirical covariance matrix.

\begin{lemma} \label{lemma:vector_den} (\textbf{Vector denominator lower bound}) For each bucket $j$, with probability at least $1 - p/2$, the following holds:
$$\lambda_{\min}\left(\sum_{i \in \mc{B}_j} z_T^{(i)}\big(z_T^{(i)}\big)^{\top}\right) \geq M/2, \hspace{2mm} \textrm{provided} \hspace{2mm} M \geq c (d^2/p) C_A C_w,$$
where $C_A$ and $C_w$ are as defined in~\eqref{eqn:vector_sample_comp}. 
\end{lemma}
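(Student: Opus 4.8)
The plan is to reduce the smallest-eigenvalue bound to a Frobenius-norm deviation bound, and then control that deviation through the scalar variance statistic $\text{var}(\cdot)$ together with Markov's inequality. Write $S \triangleq \sum_{i \in \mc{B}_j} z_T^{(i)}\big(z_T^{(i)}\big)^{\top}$. Since $\E[z_T^{(i)}\big(z_T^{(i)}\big)^{\top}] = \Sigma_x^{-1/2}\Sigma_x\Sigma_x^{-1/2} = I_d$ by the whitening construction, the i.i.d.\ nature of the trajectories gives $\E[S] = M I_d$, so $\lambda_{\min}(\E[S]) = M$. By Weyl's inequality, $\lambda_{\min}(S) \geq \lambda_{\min}(\E[S]) - \norm{S - \E[S]} = M - \norm{S - \E[S]}$, and since the spectral norm is dominated by the Frobenius norm, $\norm{S - \E[S]} \leq \norm{S - \E[S]}_F$. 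Hence it suffices to show that $\norm{S - \E[S]}_F \leq M/2$ with probability at least $1 - p/2$, as this immediately yields $\lambda_{\min}(S) \geq M/2$.

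To this end, I would invoke the additivity of the variance statistic over independent summands: with $X_i \triangleq z_T^{(i)}\big(z_T^{(i)}\big)^{\top}$, we have $\text{var}(S) = \E[\norm{S-\E[S]}_F^2] = \sum_{i \in \mc{B}_j}\text{var}(X_i) = M\,\text{var}(X_1)$, because the cross terms vanish by independence and zero-mean-ness of $X_i - \E[X_i]$. A single application of Markov's inequality to the nonnegative scalar $\norm{S-\E[S]}_F^2$ then gives $\mathbb{P}\big(\norm{S-\E[S]}_F \geq M/2\big) \leq 4 M\,\text{var}(X_1)/M^2 = 4\,\text{var}(X_1)/M$, so the claim follows provided $M \geq 8\,\text{var}(X_1)/p$. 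The entire argument thus rests on showing $\text{var}(X_1) \leq c\, d^2 C_A C_w$.

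The core computation is to bound $\text{var}(X_1)$. Using $\norm{z_T z_T^{\top}}_F^2 = \tr{(z_T z_T^{\top})^2} = \norm{z_T}^4$ and $\E[X_1] = I_d$, I get $\text{var}(X_1) = \E[\norm{z_T}^4] - d$, so it remains to bound $\E[\norm{z_T}^4]$. Since $\norm{z_T}^2 = x_T^{\top}\Sigma_x^{-1}x_T \leq \norm{x_T}^2/\lambda_{\min}(\Sigma_x)$ and $\Sigma_x = \sigma^2 G_T$, this reduces to bounding $\E[\norm{x_T}^4]$, after which $\E[\norm{z_T}^4] \leq \E[\norm{x_T}^4]/(\sigma^4 \lambda_{\min}(G_T)^2)$. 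Writing $x_T = \sum_{t=0}^{T-1} A^t n_t$ with $n_t \triangleq w_{T-(t+1)}$ i.i.d.\ and zero-mean, I would expand $\E[\norm{x_T}^4] = \sum_{s,t,u,v}\E[(v_s^{\top}v_t)(v_u^{\top}v_v)]$ with $v_t \triangleq A^t n_t$; by independence and zero mean only the paired-index terms survive, leaving three groups: $\sum_a \E[\norm{v_a}^4]$, $\sum_{a\neq b}\E[\norm{v_a}^2]\E[\norm{v_b}^2]$, and $2\sum_{a\neq b}\E[(v_a^{\top}v_b)^2]$. Each group I would bound by $\sigma^4$ or $\tilde\sigma^4$ times $\big(\sum_t \norm{A^t}^2\big)^2$ and a power of $d$, using $\norm{A^t}_F^2 \leq d\norm{A^t}^2$ and $\norm{(A^a)^{\top}A^b}_F^2 \leq d\norm{A^a}^2\norm{A^b}^2$. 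Assembling these and dividing by $\sigma^4\lambda_{\min}(G_T)^2$ produces exactly $\E[\norm{z_T}^4] \leq c\,d^2 (\tilde\sigma^4/\sigma^4)\big(\sum_t\norm{A^t}^2/\lambda_{\min}(G_T)\big)^2 = c\,d^2 C_A C_w$, as required.

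The main obstacle is the term $\sum_a \E[\norm{v_a}^4]$, which forces me to control the fourth moment $\E[\norm{n_a}^4] = \sum_{i,j}\E[(n_a(i))^2 (n_a(j))^2]$. Crucially, the noise model in~\eqref{eqn:noise_model} specifies only the diagonal fourth moments $\E[(w_t(i))^4]=\tilde\sigma^4$ and the covariance $\sigma^2 I_d$; it says nothing about the joint fourth moments of distinct coordinates, which need not factor as in the independent-coordinate case. I would handle this by a Cauchy--Schwarz bound $\E[(n_a(i))^2(n_a(j))^2] \leq \tilde\sigma^4$ for every $i,j$, giving $\E[\norm{n_a}^4] \leq d^2\tilde\sigma^4$; this is precisely where the extra $d$ factor (relative to the Gaussian benchmark) enters, and tracking it carefully through all three groups is the delicate part of the proof.
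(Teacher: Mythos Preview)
Your proposal is correct and essentially mirrors the paper's proof: both reduce to bounding $\E[\norm{S-\E[S]}_F^2] = M\,\text{var}(z_Tz_T^{\top})$ via Markov, then control $\E[\norm{z_T}^4]$ by passing to $\E[\norm{x_T}^4]/(\sigma^4\lambda_{\min}(G_T)^2)$ and expanding into the same three paired-index groups---the paper packages this expansion as the matrix identity of Lemma~\ref{lemma:vec_den_biquadratic_terms} and then takes the trace, which is the same computation since $\tr{(x_Tx_T^{\top})^2} = \norm{x_T}^4$. Your Cauchy--Schwarz handling of the cross-coordinate moments $\E[(n_a(i))^2(n_a(j))^2] \leq \tilde\sigma^4$ is in fact slightly more careful than the paper, which writes $\E[(n_1(i))^2(n_1(j))^2] = \sigma^4$ for $i \neq j$ as though coordinate-wise independence were part of~\eqref{eqn:noise_model}; either route yields the same bound $\E[\norm{n_a}^4] \leq d^2\tilde\sigma^4$.
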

The key step in the proof of the above result involves bounding the trace of the matrix $\E[(x_Tx_T^{\top})^2]$; here, we have dropped the superscript $(i)$ for clarity of exposition. By exploiting the i.i.d. and zero-mean properties of the noise process $\{w_t\}$, the next result helps considerably simplify the expression for $\E[(x_Tx_T^{\top})^2]$. 
\begin{lemma}
    \label{lemma:vec_den_biquadratic_terms} Define $n_t \triangleq w_{T-(t+1)}.$
    Given the system in \eqref{eqn:sys_model} and the noise assumptions in \eqref{eqn:noise_model}, we have
    \begin{equation}
    \resizebox{0.8\hsize}{!}{$
    \begin{aligned}
\E\left[(x_Tx_T^{\top})^2\right] &= \sum_{t=0}^{T-1} \E\left[(A^tn_tn_t^{\top}(A^t)^{\top})^2\right] 
         + 2\sum_{s\neq t=0}^{T-1} \E\left[ A^tn_tn_t^{\top}(A^t)^{\top} A^sn_sn_s^{\top}(A^s)^{\top}\right]\\ 
         &+ \sum_{s\neq t=0}^{T-1} \E\left[ A^tn_tn_s^{\top}(A^s)^{\top} A^sn_sn_t^{\top}(A^t)^{\top}\right]. 
         \nonumber 
\end{aligned}$}
    \end{equation}
\end{lemma}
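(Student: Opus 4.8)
The plan is to expand $(x_T x_T^\top)^2$ directly in terms of the noise vectors and then discard every term whose expectation vanishes by independence and the zero-mean property. I would begin by writing $x_T = \sum_{t=0}^{T-1} A^t n_t$ under the zero initial condition, so that $x_T x_T^\top = \sum_{t,s} A^t n_t n_s^\top (A^s)^\top$ and hence
\[
(x_T x_T^\top)^2 = \sum_{t,s,p,q=0}^{T-1} A^t n_t n_s^\top (A^s)^\top A^p n_p n_q^\top (A^q)^\top.
\]
Pushing $\E$ inside this finite sum, the task reduces to identifying which index quadruples $(t,s,p,q)$ leave a nonzero fourth-order noise moment.

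Next I would invoke a parity argument. Since $\{n_t\}$ are independent and zero-mean, the expectation of any monomial in which some index appears an odd number of times is zero. With four noise factors indexed by $t,s,p,q$, the surviving configurations are exactly those in which every distinct index occurs an even number of times: either all four indices coincide, or the four positions split into two pairs carrying two distinct values (a $3$--$1$ split is ruled out by parity). The all-equal case $t=s=p=q$ collapses to $\sum_t \E[(A^t n_t n_t^\top (A^t)^\top)^2]$, which is the first term; I would leave it unexpanded, so that the kurtosis $\tilde{\sigma}^4$ enters the final identity only through this diagonal contribution.

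I would then enumerate the three ways to pair the positions $(1,2,3,4)$ carrying indices $(t,s,p,q)$: the split $\{1,2\},\{3,4\}$ giving $t=s\ne p=q$; the split $\{1,3\},\{2,4\}$ giving $t=p\ne s=q$; and the split $\{1,4\},\{2,3\}$ giving $t=q\ne s=p$. For the first split, independence factorizes the expectation and $\E[n_t n_t^\top]=\sigma^2 I_d$ yields $\sigma^4 A^t (A^t)^\top A^p (A^p)^\top$. The crux of the argument, and the step I expect to be most delicate, is the crossed split $t=p\ne s=q$, where the noise vectors are interleaved: here I would first reduce the interior to a scalar via $n_s^\top M n_t = n_t^\top M^\top n_s$ with $M=(A^s)^\top A^t$, regroup the surviving factor as $(n_t n_t^\top)\, M^\top\, (n_s n_s^\top)$, and again apply $\E[n n^\top]=\sigma^2 I_d$ twice to obtain \emph{exactly} the same value $\sigma^4 A^t (A^t)^\top A^s (A^s)^\top$. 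After relabeling $p \to s$, the first and crossed splits therefore contribute identically, which is precisely the origin of the factor $2$ in the middle term $2\sum_{s\ne t}\E[A^t n_t n_t^\top (A^t)^\top A^s n_s n_s^\top (A^s)^\top]$.

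Finally, the remaining split $t=q\ne s=p$ is kept in unexpanded form and furnishes the last term $\sum_{s\ne t}\E[A^t n_t n_s^\top (A^s)^\top A^s n_s n_t^\top (A^t)^\top]$. For bookkeeping, I would observe that the three two-distinct-value patterns are mutually exclusive, coinciding only in the already-separated all-equal case, so no term is double-counted; and that no fourth-moment tensor enters the paired terms, since each survives solely through self-pairings $\E[n n^\top]=\sigma^2 I_d$. Summing the four contributions gives the claimed identity. The main obstacle is thus purely the combinatorial bookkeeping of the crossed pairing together with verifying that it reproduces the $\{1,2\},\{3,4\}$ contribution, thereby accounting for the factor of two.
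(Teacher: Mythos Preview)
Your proposal is correct and follows essentially the same approach as the paper: both expand $(x_T x_T^\top)^2$ as a quadruple sum over noise indices and retain only the configurations in which every distinct index appears an even number of times. The paper organizes the expansion by first writing $x_T x_T^\top$ as a diagonal-plus-off-diagonal sum $(T_1+T_2)(T_3+T_4)$ and analyzing the four cross products, whereas you go directly to the quadruple sum and enumerate the three pairings via a parity argument, but the surviving terms and the origin of the factor $2$ (the crossed pairing $\{1,3\},\{2,4\}$ reproducing the $\{1,2\},\{3,4\}$ contribution) are identical in both.
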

We use the above result in tandem with various trace inequalities to establish Lemma~\ref{lemma:vector_den}. Combining the results from Lemma~\ref{lemma:vector_num} and Lemma~\ref{lemma:vector_den} immediately provides a guarantee for the OLS estimate from each bucket, much like in \eqref{eqn:scalar_bound_pre_boost}. For boosting, we employ an argument similar to the scalar case, along with properties of the geometric median from~\cite{minsker2015geometric}. 

\textbf{Avenues for improvement.} We discuss the sources of the extra $O(d)$ factor (relative to the sub-Gaussian noise case) in our error-bound of~\eqref{eqn:vector_err_bound}. To invoke the results from~\cite{minsker2015geometric} for the geometric median, we need to work with the Frobenius norm. As such, we use the inequality $\norm{\hat{A_j} - A}_F \leq \sqrt{d} \norm{\hat{A_j} - A}$, costing us an extra $\sqrt{d}$ factor in the boosting step. This could be avoided if it were possible to provide guarantees for robust matrix aggregation directly w.r.t. the spectral norm. The other $\sqrt{d}$ factor comes from Lemma~\ref{lemma:vector_num} for which we used variants of Markov's inequality. One might hope that a more powerful concentration tool can lead to a tighter bound. Perhaps the most relevant result in this context is provided by Theorem 5.48 in \cite{vershynin2010introduction}, which concerns bounding the expected value of the spectral norm of a matrix with heavy-tailed rows. Applied to our setting, we obtain with probability at least $1 - p$, $\normlr{\sum_{i \in \mc{B}_j} w_T^{(i)}\big(z_T^{(i)}\big)^{\top}} \leq c_1 \sqrt{d \sigma^2 M/p} + c_2\sqrt{m \log(d)/p}$, where $m = \E\left[\max_{k\in[d]} \normlr{\sum_{i \in \mc{B}_j} w_T^{(i)}(k)z_T^{(i)}}^2\right]$. Notice that if we could commute the $\max$ and $\E[\cdot]$ operators in the definition of $m$, we would be able to shave off a $\sqrt{d}$ factor from the bound of Lemma~\ref{lemma:vector_num}. However, such an operation is not valid in general. On the other hand, if we upper-bound the $\max$ by summing over $k \in [d]$, we end up with the same bound as in Lemma~\ref{lemma:vector_num}. Although tighter concentration bounds are available for the maxima of sub-Gaussian random variables, we are unaware of analogous bounds under our noise assumptions. Thus, it remains an open problem to ascertain whether our current bounds can be further improved.
\vspace{-3mm}
\section{System Identification under Adversarial Corruptions}
\label{sec:corruption}
In this section, we show that our prior developments concerning \texttt{Robust-SysID} can be extended to account for adversarial corruption in conjunction with heavy-tailed noise. To make this precise, we consider the \emph{strong-contamination} attack model from the robust statistics literature~\cite{lugosi2021robust}, where an adversary can \emph{arbitrarily} corrupt a small fraction $\eta \in [0, 1/2)$ of the data. In our context, we allow the adversary to contaminate up to $\eta N$ number of trajectories in the data set $\mc{D}$. We have the following result for this setting. 

\begin{theorem} \label{thm:corruption} (\textbf{Robustness to adversarial corruptions}) Consider the strong-contamination model described above. With probability at least $1-\delta$, the following bound holds for the output $\hat{A}$ of \texttt{Robust-SysID} when $\eta < 0.5/(c_1d^2C_AC_w), K \geq \ceil{c_2 \log(1/\delta) + c_3 \eta N},$ and $ M \geq c_4 d^2C_AC_w$: 
\begin{equation}
    \norm{\hat{A} - A} \leq C d^{3/2}\left(\sqrt{\frac{\log(1/\delta)}{N \lambda_{\min}(G_T)}} + \sqrt{\frac{\eta}{ \lambda_{\min}(G_T)}}\right).
    \label{eqn:cprruption_err_bound}
\end{equation}
\end{theorem}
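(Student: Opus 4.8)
The plan is to build directly on the clean-bucket machinery developed for Theorem~\ref{thm:vector_case}, isolating the effect of the adversary through a purely combinatorial argument and then re-running the geometric-median boosting step. The crucial observation is that each corrupted trajectory belongs to exactly one bucket, so the $\eta N$ corrupted trajectories can spoil at most $\eta N$ of the $K$ buckets. Call a bucket \emph{clean} if it contains no corrupted trajectory. On every clean bucket the local OLS estimator is computed from genuine i.i.d.\ samples of~\eqref{eqn:sys_model}, so the per-bucket analysis behind Theorem~\ref{thm:vector_case}---Lemma~\ref{lemma:vector_num} combined with Lemma~\ref{lemma:vector_den}, evaluated at a constant failure level $p$---applies verbatim and yields $\norm{\hat{A}_j - A} \leq r$ with probability at least $3/4$, where $r := c\, d/\sqrt{M \lambda_{\min}(G_T)}$; note that the hypothesis $M \geq c_4 d^2 C_A C_w$ is exactly what Lemma~\ref{lemma:vector_den} demands once $p$ is fixed to a constant.

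Next I would run the counting argument guaranteeing a strict majority of \emph{good} buckets, calling a bucket good if it is uncontaminated and its honest-data estimate is concentrated, i.e.\ $\norm{\hat{A}_j - A}_F \leq \rho := \sqrt{d}\, r$. Since the buckets hold disjoint, independent trajectories, the concentration events are independent across all $K$ buckets, each failing with probability at most $1/4$. Hoeffding's inequality, calibrated through the $c_2\log(1/\delta)$ term in $K$, then shows that outside an event of probability at most $\delta$ at least two-thirds of the $K$ buckets are concentrated. Because the adversary can touch at most $\eta N$ buckets, and the choice $K \geq \ceil{c_2 \log(1/\delta) + c_3 \eta N}$ forces $\eta N/K$ below a small constant for $c_3$ large, subtracting the contaminated buckets still leaves at least a $(1-\alpha)$ fraction of good buckets for some fixed $\alpha < 1/2$.

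The boosting step is then immediate. Applying Minsker's deterministic geometric-median lemma~\cite{minsker2015geometric} in the Hilbert space $(\mathbb{R}^{d\times d}, \norm{\cdot}_F)$: if more than a $(1-\alpha)$ fraction of the points lie within $\rho$ of a target, their geometric median $\hat{A}$ obeys $\norm{\hat{A} - A}_F \leq C_\alpha \rho$ with $C_\alpha = O(1)$. Converting with $\norm{\hat{A} - A} \leq \norm{\hat{A} - A}_F$ gives $\norm{\hat{A} - A} \leq C_\alpha \sqrt{d}\, r = C' d^{3/2}/\sqrt{M \lambda_{\min}(G_T)}$. Substituting $M = N/K$ with $K \asymp \log(1/\delta) + \eta N$ and using $\sqrt{a+b} \leq \sqrt{a} + \sqrt{b}$ then splits the bound into the two claimed terms,
\begin{equation*}
\norm{\hat{A} - A} \leq C d^{3/2}\left(\sqrt{\frac{\log(1/\delta)}{N \lambda_{\min}(G_T)}} + \sqrt{\frac{\eta}{\lambda_{\min}(G_T)}}\right),
\end{equation*}
which is~\eqref{eqn:cprruption_err_bound}. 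The hypothesis $\eta < 0.5/(c_1 d^2 C_A C_w)$ surfaces here as the feasibility condition for the parameter choice: with $K$ set near its lower bound one has $M = N/K \approx 1/(c_3\eta)$ in the corruption-dominated regime, so the requirement $M \geq c_4 d^2 C_A C_w$ that powers the per-bucket bound can hold only when $\eta \lesssim 1/(d^2 C_A C_w)$.

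I expect the main obstacle to lie in the counting step---simultaneously controlling the adversarially contaminated buckets (a deterministic worst-case count) and the randomly unconcentrated buckets (a high-probability Hoeffding bound), and verifying that their combined fraction provably stays below $1/2$ while the interlocking parameter constraints on $M$, $K$, and the admissible corruption level $\eta$ remain mutually consistent. Once this is pinned down, the remainder reduces to the already-established per-bucket estimates of Lemmas~\ref{lemma:vector_num} and~\ref{lemma:vector_den} and the black-box geometric-median guarantee.
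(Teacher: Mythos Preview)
Your proposal is correct and follows essentially the same approach as the paper: isolate at most $\eta N$ contaminated buckets, apply the clean per-bucket bound from Lemmas~\ref{lemma:vector_num}--\ref{lemma:vector_den} at a constant failure level, control the fraction of unconcentrated clean buckets via Hoeffding, and then invoke the geometric-median lemma before substituting $M=N/K$ and splitting with $\sqrt{a+b}\leq\sqrt{a}+\sqrt{b}$. The paper formalizes your ``good bucket'' counting through the product indicator $Z_jW_j$ (with $W_j$ flagging an uncorrupted bucket) and bounds $\E[Z_jW_j]\leq p$, but this is exactly the mechanism you describe.
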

Comparing the above result with the case without adversarial corruptions from Theorem~\ref{thm:vector_case}, notice that the error bound in \eqref{eqn:cprruption_err_bound} recovers the bound in \eqref{eqn:vector_err_bound} up to an additive $O(\sqrt{\eta})$ term; this is consistent with analogous results for robust mean estimation in~\cite{lugosi2021robust}. Therefore, Theorem~\ref{thm:corruption} shows that \texttt{Robust-SysID} can effectively counter adversarial corruption by carefully designing the number of buckets, and leveraging the inherent robustness of the geometric median. To gain intuition, consider the worst-case scenario where the adversary corrupts $\eta N$ buckets by corrupting exactly one trajectory in each such bucket. To keep the median well-concentrated in this case, we need the number of \emph{uncorrupted} buckets to be in the order of $\log(1/\delta)$ as shown in \eqref{eqn:vector_err_bound}, which can be ensured with $O(\eta N)$ extra buckets. This explains the requirement on $K$ in Theorem~\ref{thm:corruption}, which, in turn, imposes bounds on $N$ and $\eta$. 
Interestingly, similar constraints on $N$ and $\eta$ are \emph{not} required for robust mean estimation~\cite{lugosi2021robust}. This difference can be attributed to the fact that, unlike mean estimation, sys-ID requires a minimum number of trajectories $M$ per bucket to ensure that the empirical covariance matrix in each uncorrupted bucket is well-behaved. 

\section{Conclusion}
\label{sec:conclusion}
System identification is one of the central components of many algorithms that aim to learn and control dynamical systems. Particularly, understanding the sample complexity for learning the parameters of an unknown system in finite time is of immense practical value. To this end, several works have studied the problem of finite sample analysis of system identification; however, these works make restrictive assumptions on the noise sequence which could render them impractical. To bridge this gap, we considered the problem of linear system identification under heavy tailed noise distributions that admit no more than the fourth moment. We showed that, even under such a general assumption, one could provide finite sample bounds that nearly match the ones obtained under the more restrictive Gaussian or sub-Gaussian noise assumptions.

For this purpose, we proposed the \texttt{Robust-SysID} algorithm which carefully divides the collected trajectories into buckets, computes OLS estimates for each bucket, and finally fuses the OLS estimates using the geometric median. By leveraging simple concentration tools and integrating techniques from robust statistics, particularly the geometric median, we derived strong performance guarantees for our algorithm. While our bounds incur a dimensional dependence factor $d$ compared to Gaussian and sub-Gaussian settings, our work lays the foundation for advancing the analysis and design of algorithms under heavy-tailed noise. To our knowledge, better algorithms and/or analysis techniques that could match the Gaussian and sub-Gaussian bound under the general heavy-tailed noise assumption made in this work is an open problem. 

There are several open questions concerning the heavy-tailed and adversarial noise processes considered in this paper. We list some of them below.

\begin{enumerate}
\item Our main results in Theorems~\ref{thm:vector_case} and~\ref{thm:corruption} feature certain dependencies on the state transition matrix, the dimension $d$ of the system, and the kurtosis of the noise distribution. Deriving information-theoretic lower bounds for the noise models considered in our paper will shed light on the tightness of our bounds. 

\item A natural next step is to consider the single-trajectory setting. For this case, extending the ideas from robust statistics used in this paper is an interesting open direction, especially for unstable systems. 

\item The proofs of our main results exploited the linearity of the dynamical system on several occasions. As such, it would be interesting to see how such proofs can be extended to the non-linear setting~\cite{ziemann2022single, foster2020learning}. 

\item Beyond system identification, we are interested in studying data-driven model-based and model-free control under more general noise processes. 
\end{enumerate}

Finally, it would be interesting to derive data-dependent bounds, such as those in~\cite{pananjady2020instance}. We hope that our work sparks interest in some of the above questions. Overall, we believe that integrating tools from high-dimensional robust statistics into data-driven control-theoretic problems could provide novel algorithms and analysis techniques that help us understand the fundamental limits of operations for many engineering problems. 

\section{Acknowledgments} The authors would like to thank Professor Arvind K. Saibaba for valuable discussions on the topic of this paper. 

\bibliographystyle{unsrt}
\bibliography{References}

\newpage

\appendix

\section{Preliminary Results and Facts} \label{app:priliminary_results}
In this section, we provide some of the key results and facts that will aid us in the analysis provided in the subsequent appendices. We use the following notation: $ A \succeq 0$ means that $A$ is positive semi-definite, $A \succ 0$ means that $A$ is positive definite, $A \preceq B$ means that $B - A$ is positive semi-definite, and $A \not\preceq B$ means that $B - A$ is not positive semi-definite.
\begin{lemma}
    \label{lemma:trace_inequalities}
    The following inequalities hold for any symmetric positive semi-definite matrices $A$ and $B$, where $A, B \in \mathbb{R}^{d \times d}$:
    \begin{align}
        \tr{AB} &\leq \tr{A}\tr{B} \label{eqn:tr_ab_eq_tra_trb}\\
        \tr{A} &\leq d\lambda_{\max}(A) \label{eqn:tr_a_d_lambda}.
    \end{align}
\end{lemma}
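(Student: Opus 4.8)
The plan is to prove both inequalities via the spectral decomposition of symmetric positive semi-definite matrices, exploiting that their eigenvalues are all nonnegative and that the trace equals the sum of those eigenvalues. Since the PSD hypothesis is what makes every relevant quantity nonnegative, it is exactly what lets us discard terms in the correct direction.

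For the second inequality~\eqref{eqn:tr_a_d_lambda}, I would write $A$ in its eigenbasis so that $\tr{A} = \sum_{i=1}^d \lambda_i(A)$, where each $\lambda_i(A) \geq 0$ because $A \succeq 0$. Bounding every eigenvalue by the largest one, $\lambda_i(A) \leq \lambda_{\max}(A)$, and summing over the $d$ eigenvalues immediately yields $\tr{A} \leq d\,\lambda_{\max}(A)$. This is essentially a one-line argument.

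For the first inequality~\eqref{eqn:tr_ab_eq_tra_trb}, I would take spectral decompositions $A = \sum_i \lambda_i u_i u_i^{\top}$ and $B = \sum_j \mu_j v_j v_j^{\top}$, with $\{u_i\}$ and $\{v_j\}$ orthonormal eigenbases and all $\lambda_i, \mu_j \geq 0$. Expanding the product and using cyclicity of the trace gives $\tr{AB} = \sum_{i,j} \lambda_i \mu_j (u_i^{\top} v_j)^2$. Since the $u_i, v_j$ are unit vectors, Cauchy--Schwarz gives $(u_i^{\top} v_j)^2 \leq 1$, and because every coefficient $\lambda_i \mu_j$ is nonnegative, I can drop these inner-product factors to upper bound the sum by $\sum_{i,j}\lambda_i \mu_j = \left(\sum_i \lambda_i\right)\left(\sum_j \mu_j\right) = \tr{A}\tr{B}$.

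An equally short alternative for~\eqref{eqn:tr_ab_eq_tra_trb} would be to chain two standard facts: since $A \preceq \lambda_{\max}(A) I$ and $B \succeq 0$, one has $\tr{AB} \leq \lambda_{\max}(A)\tr{B}$ (using that the trace of a product of PSD matrices is nonnegative), and then $\lambda_{\max}(A) \leq \tr{A}$ closes the bound. I do not expect a genuine obstacle in either case; the only point requiring slight care is tracking nonnegativity of all the relevant quantities — the eigenvalues and the squared inner products — so that the discarded terms can only decrease the bound, which is guaranteed precisely by the positive semi-definiteness hypothesis.
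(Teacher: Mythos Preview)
Your argument for \eqref{eqn:tr_a_d_lambda} is identical to the paper's. For \eqref{eqn:tr_ab_eq_tra_trb}, however, the paper takes a different route: it invokes von Neumann's trace inequality (Theorem~8.7.6 in Horn--Johnson), which gives $\tr{AB} \leq \sum_{i=1}^d \lambda_i(A)\lambda_i(B)$ for the \emph{ordered} eigenvalues, and then bounds this sum by $\big(\sum_i \lambda_i(A)\big)\big(\sum_j \lambda_j(B)\big)=\tr{A}\tr{B}$ using nonnegativity. Your direct spectral-expansion argument, $\tr{AB}=\sum_{i,j}\lambda_i\mu_j(u_i^\top v_j)^2 \leq \sum_{i,j}\lambda_i\mu_j$, is correct and more self-contained: it avoids citing a nontrivial external result and instead relies only on Cauchy--Schwarz and nonnegativity. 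The paper's approach yields a sharper intermediate bound (the ordered-eigenvalue sum), but that extra precision is not needed here, so your elementary route is arguably preferable for this lemma. Your second alternative via $\tr{AB}\leq \lambda_{\max}(A)\tr{B}\leq \tr{A}\tr{B}$ is also valid and equally concise.
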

\begin{proof}
    The proof of \eqref{eqn:tr_ab_eq_tra_trb} follows from von Neumann's trace theorem (\cite{horn2012matrix}, Theorem 8.7.6) which tells us that for symmetric positive semi-definite matrices $A$ and $B$, we have $\tr{AB} \leq \sum_{i = 1}^d \lambda_i(A) \lambda_i(B)$ where $\lambda_1(A) \geq \lambda_2(A) \geq \ldots \geq \lambda_d(A)$ and $\lambda_1(B) \geq \lambda_2(B) \geq \ldots \geq \lambda_d(B)$ are ordered eigenvalues of $A$ and $B$, respectively. Notice that $\sum_{i = 1}^d \lambda_i(A) \lambda_i(B) \leq \left(\sum_{i = 1}^d \lambda_i(A)\right) \left(\sum_{j = 1}^d \lambda_j(B)\right) = \tr{A}\tr{B}$.

    The proof of \eqref{eqn:tr_a_d_lambda} follows from the fact that $\tr{A} = \sum_{i = 1}^d \lambda_i(A) \leq d \lambda_{\max}(A)$ where $\lambda_{\max}(A)$ is the largest eigenvalue of $A$.
\end{proof}

\begin{lemma}
    \label{lemma:trace_inequalities_rank_1}
    The following hold for any symmetric positive semi-definite rank 1 matrix $A \in \mathbb{R}^{d \times d}$ and a symmetric positive semi-definite matrix $B \in \mathbb{R}^{d \times d}$:
    \begin{align}
        \tr{AB} &\leq \tr{A}\lambda_{\max}(B) \label{eqn:tr_ab_tr_a_lambda_b}\\
        (\tr{A})^2 &= \tr{A^2} \label{eqn:tr_a_sq}.
    \end{align}
\end{lemma}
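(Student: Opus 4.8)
The plan is to exploit the rank-one structure of $A$ directly. Since $A$ is symmetric and positive semi-definite with rank one, it admits an outer-product representation $A = uu^{\top}$ for some $u \in \mathbb{R}^d$: writing the spectral decomposition of $A$, exactly one eigenvalue $\lambda_1 \geq 0$ is nonzero, and one absorbs it into the associated unit eigenvector $v_1$ by setting $u = \sqrt{\lambda_1}\, v_1$. With this representation in hand, both identities reduce to elementary manipulations that hinge on the cyclic property of the trace and the fact that $u^{\top}u = \|u\|^2 = \tr{A}$.

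To establish \eqref{eqn:tr_ab_tr_a_lambda_b}, I would substitute $A = uu^{\top}$ and use cyclicity of the trace to write
\begin{equation}
\tr{AB} = \tr{uu^{\top}B} = u^{\top}Bu. \nonumber
\end{equation}
Since $B$ is symmetric, the Rayleigh-quotient bound gives $u^{\top}Bu \leq \lambda_{\max}(B)\,\|u\|^2$, and recognizing $\|u\|^2 = \tr{uu^{\top}} = \tr{A}$ yields $\tr{AB} \leq \lambda_{\max}(B)\,\tr{A}$. I note in passing that this also follows from von Neumann's trace inequality already invoked in Lemma~\ref{lemma:trace_inequalities}, since a rank-one $A$ contributes only the single term $\lambda_1(A)\lambda_1(B) = \tr{A}\,\lambda_{\max}(B)$; the outer-product argument is simply more self-contained.

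For \eqref{eqn:tr_a_sq}, I would square the rank-one representation and collapse the inner product $u^{\top}u$ to a scalar:
\begin{equation}
A^2 = uu^{\top}uu^{\top} = (u^{\top}u)\,uu^{\top} = \|u\|^2 A. \nonumber
\end{equation}
Taking traces of both sides and again using $\tr{A} = \|u\|^2$ gives $\tr{A^2} = \|u\|^2\,\tr{A} = (\tr{A})^2$, which is the claim.

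Both statements are essentially one-line computations once the decomposition is fixed, so no genuine obstacle arises. The only points requiring any care are justifying that a symmetric positive semi-definite matrix of rank one can always be written as a single outer product $uu^{\top}$, and invoking the Rayleigh-quotient bound correctly for the first part; beyond these the argument is immediate.
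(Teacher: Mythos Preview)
Your proof is correct. It differs from the paper's in that you work directly with the outer-product representation $A = uu^{\top}$ and invoke the Rayleigh-quotient bound, whereas the paper argues via eigenvalues throughout: for \eqref{eqn:tr_ab_tr_a_lambda_b} it applies von Neumann's trace inequality $\tr{AB} \leq \sum_i \lambda_i(A)\lambda_i(B)$ and bounds each $\lambda_i(B)$ by $\lambda_{\max}(B)$ (incidentally never using that $A$ has rank one for this part), and for \eqref{eqn:tr_a_sq} it observes that a rank-one PSD matrix has a single nonzero eigenvalue, so $(\tr{A})^2 = \lambda_{\max}(A)^2 = \tr{A^2}$. Your approach is more self-contained since it avoids citing von Neumann, while the paper's eigenvalue route makes transparent that \eqref{eqn:tr_ab_tr_a_lambda_b} in fact holds for any symmetric PSD $A$, not just rank-one. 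You already anticipated the von Neumann alternative in your aside, so the two arguments are close in spirit.
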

\begin{proof}
    Similar to the proof of Lemma \ref{lemma:trace_inequalities}, \eqref{eqn:tr_ab_tr_a_lambda_b} follows from von Neumann's trace theorem (\cite{horn2012matrix}, Theorem 8.7.6). We have $\tr{AB} \leq \sum_{i = 1}^d \lambda_i(A) \lambda_i(B) \leq \lambda_{\max}(B)\sum_{i = 1}^d \lambda_i(A) = \tr{A}\lambda_{\max}(B)$, where $\lambda_{\max}(B)$ is the maximum eigenvalue of $B$. 

    The proof of \eqref{eqn:tr_a_sq} follows from the fact that $(\tr{A})^2 = \left(\sum_{i = 1}^d \lambda_i(A)\right)^2 = \left(\lambda_{\max}(A)\right)^2$ where the last equality is due to the  rank 1 nature of $A$ which allows a single non-zero eigenvalue. Furthermore, notice that $\left(\lambda_{\max}(A)\right)^2 = \tr{A^2}$ as matrix $A^2$ is also rank 1 with a single non-zero eigenvalue $\left(\lambda_{\max}(A)\right)^2$. This completes the proof. 
\end{proof}

\begin{lemma} (\textbf{Markov's inequality for matrices})
    \label{lemma:matrix_markov}
    Let $X$ be a random matrix such that $X \succeq 0$ almost surely and with expectation $\mathbb{E}[X]$, and let $A \succ 0$, then
    \begin{equation}
    \label{eqn:matrix_markov}
        \mathbb{P}\left( X \not\preceq A \right) \leq \tr{\E[X]A^{-1}}. 
    \end{equation}
\end{lemma}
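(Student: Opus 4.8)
The plan is to reduce the matrix statement to a scalar Markov inequality via a whitening (congruence) transformation. Since $A \succ 0$, its inverse square root $A^{-1/2}$ is well-defined, so I would introduce the whitened random matrix $Y \triangleq A^{-1/2} X A^{-1/2}$. Congruence by the symmetric invertible matrix $A^{-1/2}$ preserves the Loewner order, so $A - X \succeq 0$ if and only if $I - Y \succeq 0$; equivalently, the ``bad'' event $\{X \not\preceq A\}$ coincides exactly with $\{Y \not\preceq I\}$. Since $Y$ is symmetric, this in turn is the event $\{\lambda_{\max}(Y) > 1\}$.

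The second step exploits the almost-sure positivity of $X$. Because $X \succeq 0$ a.s., congruence gives $Y \succeq 0$ a.s., so every eigenvalue of $Y$ is nonnegative and hence $\lambda_{\max}(Y) \le \sum_{i} \lambda_i(Y) = \tr{Y}$. Consequently the bad event is contained in $\{\tr{Y} > 1\}$, and I have replaced a statement about the spectrum of a random matrix by one about the single nonnegative scalar random variable $\tr{Y}$.

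From here I would apply the ordinary scalar Markov inequality to $\tr{Y} \ge 0$, obtaining $\mathbb{P}(\tr{Y} > 1) \le \E[\tr{Y}]$. Linearity of expectation commutes with the trace, and the cyclic invariance of the trace yields $\E[\tr{Y}] = \tr{\E[Y]} = \tr{A^{-1/2}\E[X]A^{-1/2}} = \tr{\E[X] A^{-1}}$. Chaining the inclusions then gives $\mathbb{P}(X \not\preceq A) = \mathbb{P}(\lambda_{\max}(Y) > 1) \le \mathbb{P}(\tr{Y} > 1) \le \tr{\E[X] A^{-1}}$, which is the claimed bound.

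The argument is short, and the step I would regard as the crux is the reduction of the spectral event to the scalar trace event: one must verify both that congruence by $A^{-1/2}$ preserves the Loewner order (so the two events genuinely match up) and that the almost-sure positive semi-definiteness of $Y$ is what licenses bounding $\lambda_{\max}(Y)$ by $\tr{Y}$. Were $X$ merely symmetric rather than PSD, this last inequality could fail, so the hypothesis $X \succeq 0$ is essential rather than cosmetic. Everything else is bookkeeping with linearity and the cyclic property of the trace.
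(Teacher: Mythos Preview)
Your argument is correct: the congruence $Y=A^{-1/2}XA^{-1/2}$ turns the event $\{X\not\preceq A\}$ into $\{\lambda_{\max}(Y)>1\}$, which is contained in $\{\tr{Y}>1\}$ because $Y\succeq 0$ almost surely, and scalar Markov plus cyclicity of the trace finishes the job. The paper does not give its own proof of this lemma at all; it simply cites Theorem~12 of Ahlswede--Winter, so you have supplied what the paper only referenced (and your proof is essentially the standard one from that source).
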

For the proof of the above result, see Theorem 12 from \cite{ahlswede2001strong}.


On several occasions in the proof of Theorem \ref{thm:vector_case}, we need to bound the Frobenius norm of a random matrix. In this regard, the following lemma defines a variance statistic $\text{var}(.)$ and shows how it exploits independence.
\begin{lemma} \label{lemma:var_statistic}
    Let $\{X_i\}_{i \in [n]}$ be a collection of independent matrices. Define $\text{var}(X_i) \triangleq \E[\normlr{X_i - \E[X_i]}_F^2]$. We have $$\text{var}\left(\sum_{i \in [n]} X_i \right) = \sum_{i \in [n]} \text{var}(X_i).$$
\end{lemma}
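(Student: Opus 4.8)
The plan is to mimic the classical scalar identity $\Var(\sum_i Z_i) = \sum_i \Var(Z_i)$ for independent random variables, exploiting the fact that the Frobenius norm arises from the inner product $\langle A, B\rangle_F = \tr{A^{\top}B}$ on $\mathbb{R}^{d\times d}$. Since the statistic $\text{var}(\cdot)$ is built from $\normlr{\cdot}_F^2 = \langle \cdot, \cdot\rangle_F$, the bilinearity of this inner product is precisely what allows independence to annihilate the cross terms, exactly as in the scalar case.

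First I would center the matrices: define $Y_i \triangleq X_i - \E[X_i]$, so that each $Y_i$ has zero mean and, by linearity of expectation, $\sum_{i\in[n]} X_i - \E[\sum_{i\in[n]} X_i] = \sum_{i\in[n]} Y_i$. Consequently $\text{var}(\sum_{i} X_i) = \E[\normlr{\sum_{i} Y_i}_F^2]$ and $\text{var}(X_i) = \E[\normlr{Y_i}_F^2]$, which reduces the claim to establishing $\E[\normlr{\sum_{i} Y_i}_F^2] = \sum_{i} \E[\normlr{Y_i}_F^2]$.

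Next I would expand the squared norm through the inner product: $\normlr{\sum_{i} Y_i}_F^2 = \langle \sum_{i} Y_i, \sum_{j} Y_j\rangle_F = \sum_{i,j} \langle Y_i, Y_j\rangle_F$. Taking expectations and splitting the double sum into its diagonal ($i=j$) and off-diagonal ($i\neq j$) contributions, the diagonal terms sum to $\sum_{i} \E[\normlr{Y_i}_F^2]$, which is exactly the desired right-hand side. It therefore remains only to verify that every off-diagonal term vanishes.

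The crux of the argument — and the sole place where independence is invoked — is showing $\E[\langle Y_i, Y_j\rangle_F] = 0$ for $i\neq j$. Writing $\langle Y_i, Y_j\rangle_F = \tr{Y_i^{\top} Y_j}$ and moving the expectation inside the linear trace, the claim reduces to $\E[Y_i^{\top}Y_j] = 0$. Viewing this matrix entrywise, each entry is a sum of products of the form $\E[(Y_i)_{mk}(Y_j)_{ml}]$; since $X_i$ and $X_j$ are independent for $i\neq j$, so are their centered versions, and each such expectation factors as $\E[(Y_i)_{mk}]\,\E[(Y_j)_{ml}] = 0$ because $\E[Y_i] = \E[Y_j] = 0$. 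This eliminates all cross terms and completes the proof. I do not anticipate a genuine obstacle here: the entire content is the bilinear expansion of the Frobenius inner product together with the factorization of expectations under independence, precisely the matrix analogue of the scalar variance identity.
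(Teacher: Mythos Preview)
Your proposal is correct and follows essentially the same approach as the paper: both expand $\normlr{\sum_i (X_i - \E[X_i])}_F^2$ via the trace/Frobenius inner product, separate diagonal from off-diagonal terms, and invoke independence together with the zero-mean property to kill the cross terms. The only cosmetic difference is that the paper writes the expansion as $\tr{(\cdot)(\cdot)^\top}$ and commutes trace with expectation, whereas you phrase it through the bilinear inner product $\langle Y_i, Y_j\rangle_F$ and argue entrywise; the content is identical.
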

\begin{proof}
    Using the definition of $\text{var}(\cdot)$, we have 
    \begin{align*}
       \text{var}\left(\sum_{i \in [n]} X_i \right) & = \E\left[\normlr{\sum_{i \in [n]} X_i - \E[X_i]}_F^2\right] \\
       & \overset{(a)}{=}  \E\left[\tr{\left(\sum_{i \in [n]} X_i - \E[X_i] \right) \left(\sum_{i \in [n]} X_i - \E[X_i] \right)^{\top}}\right] \\
       & \overset{(b)}{=} \tr{\E\left[ \left(\sum_{i \in [n]} X_i - \E[X_i] \right) \left(\sum_{i \in [n]} X_i - \E[X_i] \right)^{\top} \right]} \\
       & = \tr{\E \left[ \sum_{i \in [n]} (X_i - \E[X_i])(X_i - \E[X_i])^{\top} \right]}  \\
       & \quad + \tr{\E \left[ \sum_{i \neq j, i, j\in [n]} (X_i - \E[X_i])(X_j - \E[X_j])^{\top} \right]} \\
       & \overset{(c)}{=} \tr{\E \left[ \sum_{i \in [n]} (X_i - \E[X_i])(X_i - \E[X_i])^{\top} \right]} \\
       & \overset{(d)}{=}\sum_{i \in [n]} \tr{\E\left[(X_i - \E[X_i])(X_i - \E[X_i])^{\top}\right]} \\
       & \overset{(e)}{=} \sum_{i \in [n]} \E\left[\tr{(X_i - \E[X_i])(X_i - \E[X_i])^{\top}}\right] \\
       & \overset{(f)}{=} \sum_{i \in [n]} \E\left[\normlr{X_i - \E[X_i]}_F^2\right] \\
       & = \sum_{i \in [n]} \text{var}(X_i).
    \end{align*}
    In the above steps, (a) and (f) follow from the definition of the Frobenius norm, (b), (d), and (e) from the linearity of expectation and trace operators, and (c) follows due to independence as the second term vanishes as a result of taking the expectation.
\end{proof}

\begin{lemma} (\textbf{Property of the geometric median})
\label{lemma:geometric_median}
Let $A_1, \ldots, A_K \in \mathbb{R}^{d \times d}$ and let $$A_* = \texttt{Med}(A_1, \ldots, A_K) := \argmin_{\theta \in \mathbb{R}^{d \times d}} \sum_{j \in [K]} \Vert \theta - A_j \Vert_F$$ be their geometric median. Fix $\alpha \in (0, 0.5)$ and assume $A \in \mathbb{R}^{d \times d}$ is such that $\normlr{A_* - A}_F > C_\alpha r$, where $C_\alpha = (1 - \alpha) \sqrt{1/(1 - 2\alpha)}$ and $r > 0$. Then,  there exists a subset $J \subseteq [K]$ of cardinality $\abs{J} > \alpha K$ such that for all $j \in J$, $\normlr{A_j - A}_F > r$.
\end{lemma}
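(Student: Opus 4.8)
The plan is to prove the contrapositive. I would take $J := \{ j \in [K] : \Vert A_j - A\Vert_F > r\}$ to be exactly the candidate subset from the statement, so that it suffices to show the implication $\abs{J} \le \alpha K \Rightarrow \Vert A_* - A\Vert_F \le C_\alpha r$; this contradicts the hypothesis $\Vert A_* - A\Vert_F > C_\alpha r$ and forces $\abs{J} > \alpha K$, and every element of $J$ satisfies $\Vert A_j - A\Vert_F > r$ by definition. Throughout I would work in the Hilbert space $(\mathbb{R}^{d \times d}, \langle \cdot, \cdot\rangle)$ equipped with the Frobenius inner product $\langle X, Y\rangle = \tr{X Y^\top}$, whose induced norm is $\Vert \cdot\Vert_F$; this is precisely the structure that guarantees existence of $A_*$ in~\eqref{eqn:GM} and enables the first-order analysis below.

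First I would record the optimality condition for the geometric median. Since $\Phi(\theta) := \sum_{j \in [K]} \Vert \theta - A_j\Vert_F$ is convex and minimized at $A_*$, we have $0 \in \partial \Phi(A_*)$, so there exist subgradients $v_j$ with $\sum_{j \in [K]} v_j = 0$, where $v_j = (A_* - A_j)/\Vert A_* - A_j\Vert_F$ whenever $A_* \neq A_j$ and $\Vert v_j\Vert_F \le 1$ in all cases. Writing $\gamma := \Vert A_* - A\Vert_F$ and the unit matrix $U := (A_* - A)/\gamma$, I would then take the Frobenius inner product of $\sum_j v_j = 0$ against $U$ and split the sum over $J$ (the ``bad'' buckets) and its complement $J^c$ (the ``good'' buckets).

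The crux is a sharp lower bound on $\langle v_j, U\rangle$ for good $j$. Note first that $\gamma > C_\alpha r \ge r$ (since $C_\alpha \ge 1$ on $(0,1/2)$), so each good $A_j$, lying in the Frobenius ball of radius $r$ about $A$, satisfies $A_j \neq A_*$ and $v_j$ is a genuine unit vector. The geometric fact I would establish is the cone bound $\langle v_j, U\rangle \ge \sqrt{1 - r^2/\gamma^2}$: as $A_j$ ranges over a ball of radius $r$ centered at $A$ while $A_*$ sits at distance $\gamma$ from $A$, the direction $A_* - A_j$ makes an angle of at most $\arcsin(r/\gamma)$ with $A_* - A$. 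Algebraically, writing $A_* - A_j = (A_* - A) - \rho_j$ with $\Vert \rho_j\Vert_F \le r$, this reduces to $(\gamma^2 - \langle \rho_j, A_* - A\rangle)^2 \ge (\gamma^2 - r^2)\Vert (A_* - A) - \rho_j\Vert_F^2$, which I would verify by expanding both sides and collecting the difference into the manifestly nonnegative expression $(\langle\rho_j, A_* - A\rangle - r^2)^2 + (r^2 - \Vert\rho_j\Vert_F^2)(\gamma^2 - r^2)$. For the bad buckets I would use only the trivial bound $\langle v_j, U\rangle \ge -\Vert v_j\Vert_F \ge -1$.

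Finally I would combine these. Taking the inner product of $\sum_j v_j = 0$ with $U$ yields $0 \ge \abs{J^c}\sqrt{1 - r^2/\gamma^2} - \abs{J}$, and substituting $\abs{J} \le \alpha K$ and $\abs{J^c} \ge (1 - \alpha) K$ gives $\sqrt{1 - r^2/\gamma^2} \le \alpha/(1-\alpha)$. Squaring and solving for $\gamma$ produces exactly $\gamma \le (1-\alpha)\sqrt{1/(1-2\alpha)}\, r = C_\alpha r$, completing the contrapositive. I expect the main obstacle to be the cone bound of the third paragraph: a naive triangle-inequality estimate $\Vert A_* - A_j\Vert_F \le \gamma + r$ only delivers the weaker constant $1/(1-2\alpha)$, so recovering the advertised $C_\alpha$ genuinely requires the Hilbert (squared-norm) geometry. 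Secondary care is needed to justify the subgradient optimality condition and to handle possible ties $A_* = A_j$ among the bad buckets, though such ties never occur among the good ones.
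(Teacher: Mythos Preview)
Your proof is correct. The paper does not prove this lemma itself but simply cites Lemma~2.1 of \cite{minsker2015geometric}; your argument is essentially Minsker's original proof, carried out explicitly in the Frobenius Hilbert space: the first-order optimality condition $\sum_j v_j = 0$ for the geometric median, the sharp cone bound $\langle v_j, U\rangle \ge \sqrt{1-r^2/\gamma^2}$ for the ``good'' indices (which, as you correctly note, is what delivers the tight constant $C_\alpha$ rather than the cruder $1/(1-2\alpha)$), and the counting inequality that forces $\gamma \le C_\alpha r$. Your handling of the edge cases---verifying $C_\alpha \ge 1$ so that $\gamma > r$ and hence $A_* \neq A_j$ for good $j$, and allowing arbitrary unit-ball subgradients when $A_* = A_j$ among the bad indices---is also sound.
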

See Lemma 2.1 from \cite{minsker2015geometric} for the proof of the above lemma. Lemma \ref{lemma:geometric_median} helps us in the boosting step where we improve the bounds by fusing the OLS estimators obtained from different buckets using the geometric median operator.

\section{Proof of Theorem \ref{thm:scalar_case}: The scalar case}
\label{app:scalar_proof}
In this section, we prove the results pertaining to the scalar case as presented in Theorem \ref{thm:scalar_case}. For readers' convenience, we restate the lemmas from the main text before providing their proofs. We begin with Lemma \ref{lemma:scalar_num}, which provides an upper-bound for the numerator of the error term in \eqref{eqn:ols_scalar}.

\begin{lemma} (\textbf{Scalar numerator upper bound})  Fix a bucket $j \in [K]$. With probability at least $1 - p/2$, the following holds:
$$\abslr{\sum_{i \in \mc{B}_j} x_T^{(i)}w_T^{(i)}} \leq c\sigma^2\sqrt{g_T M/p}.$$ 
\end{lemma}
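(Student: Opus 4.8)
The plan is to bound the numerator $S \triangleq \sum_{i \in \mc{B}_j} x_T^{(i)} w_T^{(i)}$ appearing in the error term of~\eqref{eqn:ols_scalar} using only its second moment, followed by a single application of Chebyshev's inequality. The key observation is that, in contrast to the denominator bound in Lemma~\ref{lemma:scalar_den}, controlling this sum requires only the variance of the summands, so the fourth-moment assumption in~\eqref{eqn:noise_model} never enters here; existence of the second moment $\sigma^2$ suffices.

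First I would verify that $\E[S] = 0$. Since the state admits the expansion $x_T^{(i)} = \sum_{t=0}^{T-1} a^t w_{T-(t+1)}^{(i)}$, it is a function of $w_0^{(i)}, \ldots, w_{T-1}^{(i)}$ alone and is therefore independent of $w_T^{(i)}$; combined with the zero-mean assumption this gives $\E[x_T^{(i)} w_T^{(i)}] = \E[x_T^{(i)}]\,\E[w_T^{(i)}] = 0$. Next I would compute $\Var(S)$. The trajectories being i.i.d.\ makes the summands i.i.d.\ across $i \in \mc{B}_j$, so $\Var(S) = M\,\Var(x_T^{(1)} w_T^{(1)})$ with $M = \abslr{\mc{B}_j}$. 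For a single term, independence of $x_T^{(1)}$ and $w_T^{(1)}$ together with their zero means yields $\Var(x_T^{(1)} w_T^{(1)}) = \E[(x_T^{(1)})^2]\,\E[(w_T^{(1)})^2]$. Here $\E[(w_T^{(1)})^2] = \sigma^2$ is immediate from~\eqref{eqn:noise_model}, while expanding $x_T^{(1)} = \sum_{t=0}^{T-1} a^t n_t$ with $n_t \triangleq w_{T-(t+1)}^{(1)}$ and using that the $n_t$ are i.i.d., zero-mean, and of variance $\sigma^2$ collapses all cross terms and leaves $\E[(x_T^{(1)})^2] = \sigma^2 \sum_{t=0}^{T-1} a^{2t} = \sigma^2 g_T$. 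Hence $\Var(S) = M \sigma^4 g_T$.

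Finally I would invoke Chebyshev's inequality, which gives $\mathbb{P}(\abslr{S} \geq \lambda) \leq \Var(S)/\lambda^2 = M \sigma^4 g_T / \lambda^2$; choosing $\lambda$ so that the right-hand side equals $p/2$ produces $\lambda = \sigma^2 \sqrt{2 g_T M / p}$, precisely the claimed bound with $c = \sqrt{2}$. I do not expect a genuine obstacle in this argument: the two points meriting care are the independence of $x_T^{(i)}$ from $w_T^{(i)}$ (so that the cross moment vanishes and $S$ is centered) and the clean cancellation of off-diagonal terms in $\E[(x_T^{(1)})^2]$, both of which are direct consequences of the i.i.d., zero-mean noise structure. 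The more delicate moment bookkeeping is deferred to the denominator estimate of Lemma~\ref{lemma:scalar_den}, where the fourth moment is genuinely unavoidable.
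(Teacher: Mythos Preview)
Your proposal is correct and mirrors the paper's own proof essentially line for line: both compute $\Var(S)=M\sigma^4 g_T$ via independence of $x_T^{(i)}$ and $w_T^{(i)}$ and the expansion $\E[(x_T^{(1)})^2]=\sigma^2 g_T$, and then apply Chebyshev's inequality with the right-hand side set to $p/2$. Your additional remark that only the second moment is used here (the fourth moment being reserved for Lemma~\ref{lemma:scalar_den}) is accurate and a helpful clarification.
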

\begin{proof}
To prove the above result, one can start by noting that due to the i.i.d nature of the trajectories, $\Var(\sum_{i \in \mc{B}_j} x_T^{(i)}w_T^{(i)}) = M \Var(x_T^{(1)}w_T^{(1)})$, where $M = \abs{\mc{B}_j}$, and $\Var(Z)$ is used to represent the variance of a real-valued random variable $Z$. For clarity of notation, we drop the superscript in the rest of the proof. Observe that for each individual term,
\begin{equation}
    \Var(x_Tw_T) = \mathbb{E}[x_T^2] \mathbb{E}[w_T^2] = \sigma^2 g_T \times \sigma^2 = \sigma^4 g_T,
\end{equation} 
where, we exploited the fact that $x_T$ and $w_T$ are independent, $\mathbb{E}[x_T] = \mathbb{E}[w_T]=0$, and, defining $n_t \triangleq w_{T - (t+1)}$, we have
\begin{equation*}
\begin{aligned}
    \mathbb{E}[x_T^2] &= \mathbb{E}\left[\left(\sum_{t = 0}^{T-1} a^tn_t\right)^2\right] \\
    &= \mathbb{E}\left[\sum_{t = 0}^{T-1} a^{2t}n_t^2\right] \\
    &= \sigma^2 g_T.
\end{aligned}
\end{equation*}
The rest follows from a straightforward application of Chebyshev's inequality. For any $t >0$, we have
\begin{equation*}
    \mathbb{P}\left(\abslr{\sum_{i \in \mc{B}_j} x_T^{(i)}w_T^{(i)}} \geq t \right) \leq \frac{\sigma^4 g_T M}{t^2}.
\end{equation*}
Setting the R.H.S. of the above inequality to $p/2$, and solving for $t$ completes the proof.
\end{proof}

Similarly, Lemma \ref{lemma:scalar_den}, whose proof is included in the main text, provides a lower-bound for the denominator of the error term in \eqref{eqn:ols_scalar}. Combining the results from Lemma \ref{lemma:scalar_num} and Lemma \ref{lemma:scalar_den} using an union bound, we have that when $M \geq (c_1/p) (\tilde{\sigma}^4/\sigma^4)$, the following holds with probability at least $1 - p$:
\begin{equation}
    \abs{\hat{a}_j - a} \leq c_2\sqrt{\frac{1}{p M g_T}}.
    \label{eqn:scalar_bound_pre_boost_appendix}
\end{equation}

Next, the following result pertains to the boosting step of our algorithm.
\begin{lemma} \label{lemma:scalar_boosting} (\textbf{Scalar boosting}) In \eqref{eqn:scalar_bound_pre_boost_appendix}, fix $p = 1/4$. Let $\hat{a} = \texttt{Med}(\hat{a}_1, \ldots, \hat{a}_K)$, where $\hat{a}_1, \ldots, \hat{a}_K$ are $K$ independent OLS estimators corresponding to different buckets, each satisfying \eqref{eqn:scalar_bound_pre_boost_appendix}. Given $\delta > 0$, when $K = \ceil{c_1 \log(1/\delta)}$, the following holds with probability at least $1 - \delta:$
$$\abs{\hat{a} - a} \leq C \sqrt{\frac{\log(1/\delta)}{N g_T}}.$$
\end{lemma}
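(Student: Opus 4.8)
The plan is to upgrade the weak per-bucket guarantee in \eqref{eqn:scalar_bound_pre_boost_appendix}, whose failure probability decays only polynomially in $p$, into an exponentially small failure probability by exploiting the concentration of a sum of independent indicator variables. First I would fix $p = 1/4$ in \eqref{eqn:scalar_bound_pre_boost_appendix} and define the target accuracy $\varepsilon \triangleq c_2 (p M g_T)^{-1/2}$, so that each bucket estimator satisfies $\abs{\hat{a}_j - a} \leq \varepsilon$ with probability at least $1 - p = 3/4$. Equivalently, introducing the indicators $Y_j \triangleq \mathbbm{1}\{\abs{\hat{a}_j - a} > \varepsilon\}$, we have $\E[Y_j] \leq p = 1/4$ for every $j \in [K]$. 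Because the buckets use disjoint sets of i.i.d.\ trajectories, the estimators $\hat{a}_1, \ldots, \hat{a}_K$ are mutually independent, and hence so are the $Y_j$.

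The crux of the argument is a deterministic property of the scalar median. I would argue that the bad event $\{\abs{\hat{a} - a} > \varepsilon\}$ forces a majority of the bucket estimators to be inaccurate: if $\hat{a} > a + \varepsilon$, then at least $\ceil{K/2}$ of the values $\hat{a}_j$ must exceed $a + \varepsilon$ (and symmetrically on the other side), simply because the median $\hat{a}$ is a middle order statistic. Consequently, $\{\abs{\hat{a} - a} > \varepsilon\} \subseteq \{\sum_{j \in [K]} Y_j \geq K/2\}$, so it suffices to bound the probability that the sum of the independent Bernoulli variables exceeds $K/2$.

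Next I would apply Hoeffding's inequality to the bounded i.i.d.\ sum $\sum_{j} Y_j$. Since $\E[\sum_j Y_j] \leq K/4$, the event $\{\sum_j Y_j \geq K/2\}$ represents a deviation of at least $K/4$ above the mean, which would yield
\[
\mathbb{P}\left( \sum_{j \in [K]} Y_j \geq K/2 \right) \leq \exp\!\left(-\frac{2 (K/4)^2}{K}\right) = \exp(-K/8).
\]
Choosing $K = \ceil{8 \log(1/\delta)}$ (i.e.\ $c_1 = 8$) makes this at most $\delta$, thereby establishing $\abs{\hat{a} - a} \leq \varepsilon$ with probability at least $1 - \delta$.

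Finally, I would translate the threshold back into the stated form. Substituting $p = 1/4$ gives $\varepsilon = 2 c_2 (M g_T)^{-1/2}$, and since $M = N/K$ with $K = \Theta(\log(1/\delta))$, I obtain $\varepsilon = 2 c_2 \sqrt{K/(N g_T)} = C \sqrt{\log(1/\delta)/(N g_T)}$ for a suitable universal constant $C$, which is exactly the claimed bound. I do not anticipate a genuine obstacle here; the only points requiring care are the median-to-majority implication and the verification that the $Y_j$ are independent, both of which are straightforward. The conceptual heart of the lemma is that requiring half of $K$ independent low-probability events to occur \emph{simultaneously} is what converts the polynomial dependence on $p$ into the desired logarithmic dependence on $1/\delta$.
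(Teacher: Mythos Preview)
Your proposal is correct and follows essentially the same argument as the paper: define $\varepsilon = c_2(pMg_T)^{-1/2}$, use the median-to-majority implication to reduce to $\{\sum_j Y_j \geq K/2\}$, apply Hoeffding's inequality to the independent indicators to obtain $\exp(-K/8)$, and set $K = \ceil{8\log(1/\delta)}$ before substituting $M = N/K$. The paper's presentation differs only cosmetically in that it centers the sum by subtracting $\E[Y_1]$ before invoking Hoeffding, but the resulting bound and constants are identical.
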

\begin{proof}
    Let $\varepsilon = c_2({p M g_T})^{-1/2}.$ By the property of the median, observe that the event $\{ \abslr{\hat{a}-a} > \varepsilon\}$ implies $\{ \sum_{j \in [K]} Y_j \geq K/2 \}$, where $Y_j$ is an indicator random variable of the event  $\{ \abslr{\hat{a}_j-a} > \varepsilon\}$. This implies
    \begin{align*}
        \mathbb{P}\left( \abslr{\hat{a}-a} > \varepsilon \right) \leq \mathbb{P} \left( \sum_{j \in [K]} Y_j \geq K/2 \right) \overset{(a)}{=} \mathbb{P} \left( \frac{1}{K} \sum_{j \in [K]} (Y_j - \E[Y_1]) \geq \frac{1}{2}  - \E[Y_1] \right).
    \end{align*}
    In (a), we subtracted the common expectation since the trajectories are identically distributed. Furthermore, since the collections of trajectories from different buckets are disjoint, each of the $Y_j$'s are i.i.d. random variables in $\{0,1\}$. This enables us to use Hoeffding's inequality to infer that
    \begin{align*}
        \mathbb{P} \left( \frac{1}{K} \sum_{j \in [K]} (Y_j - \E[Y_1]) \geq \frac{1}{2}  - \E[Y_1] \right) &\leq \exp{\left(-2 K (1/2 - \E[Y_1])^2\right)} \\
        & \overset{(b)}{\leq} \exp{\left(-2 K (1/2 - 1/4)^2\right)}.
    \end{align*}
    In the above steps, (b) follows due to $\E[Y_1] = \mathbb{P}(\abslr{\hat{a}_1 - a} > \varepsilon) \leq p = 1/4$, where we used~\eqref{eqn:scalar_bound_pre_boost_appendix}.  Based on the above, we have
    $$ \mathbb{P}\left( | \hat{a}-a| > \varepsilon\right) \leq \exp(-K/8) \leq \delta,$$
    when $K=\ceil{8 \log(1/\delta)}$. Using this expression for $K$ in $\varepsilon= c_2\sqrt{\frac{4K}{N g_T}}$, where we used $M=N/K$, completes the proof of  Lemma \ref{lemma:scalar_boosting}.
\end{proof}
Finally, the proof for Theorem \ref{thm:scalar_case} follows directly from the result of Lemma \ref{lemma:scalar_boosting}, and noting that $M \geq c_2 (\tilde{\sigma}^2/\sigma^2)$ to ensure that the OLS estimator corresponding to each bucket satisfies \eqref{eqn:scalar_bound_pre_boost_appendix}. 

\newpage
\section{Proof of Theorem \ref{thm:vector_case}: The vector case}
\label{app:vector_proof}
In this section, we prove the results pertaining to the vector case as presented in Theorem \ref{thm:vector_case}. Analogous to the scalar case, we first bound the terms that constitute the error term of the OLS estimator for each bucket, and then combine them by applying an union bound. Finally, we fuse the OLS estimators corresponding to different buckets using the geometric median. The OLS estimator $\hat{A}_j$ for the $j$th bucket can be expressed as
\begin{equation*}
    \hat{A}_j = A + \sum_{i \in \mc{B}_j} w_T^{(i)}\big(x_T^{(i)}\big)^{\top}\left(\sum_{i \in \mc{B}_j} x_T^{(i)}\big(x_T^{(i)}\big)^{\top}\right)^{-1}.
    \label{eqn:ols_vector}
\end{equation*}
To simplify our analysis, we whiten the vector $x_T^{(i)}$ and define $z_T^{(i)} = \Sigma_x^{-1/2}x_T^{(i)}$, where $\Sigma_x = \E[x_T^{(i)}\big(x_T^{(i)}\big)^{\top}]$. For each $j \in [K]$, the error of the OLS estimator can be bounded as follows: 
\begin{equation*}
    \begin{aligned}
        \norm{\hat{A}_j - A} &= \normlr{\sum_{i \in \mc{B}_j} w_T^{(i)}\big(z_T^{(i)}\big)^{\top}\Sigma_x^{1/2}\left(\Sigma_x^{1/2} \bigg(\sum_{i \in \mc{B}_j} z_T^{(i)}\big(z_T^{(i)}\big)^{\top}\bigg)\Sigma_x^{1/2}\right)^{-1}} \\
        &\leq \normlr{\Sigma_x^{-1/2}} \frac{\normlr{\sum_{i \in \mc{B}_j} w_T^{(i)}\big(z_T^{(i)}\big)^{\top}}}{\lambda_{\min}{\left(\sum_{i \in \mc{B}_j} z_T^{(i)}\big(z_T^{(i)}\big)^{\top}\right)}},
    \end{aligned}
\end{equation*}
where the inequality follows from submultiplicativity of spectral norm. Based on the above expression, it suffices to individually bound $\normlr{\sum_{i \in \mc{B}_j} w_T^{(i)}\big(z_T^{(i)}\big)^{\top}}$ and $\lambda_{\min}{\left(\sum_{i \in \mc{B}_j} z_T^{(i)}\big(z_T^{(i)}\big)^{\top}\right)}$. The following lemmas, which are the restated versions of Lemma \ref{lemma:vector_num} and Lemma \ref{lemma:vector_den} in the main text, provide key results in this regard.

\begin{lemma} (\textbf{Vector numerator upper bound}) Fix a bucket $j \in [K]$. With probability at least $1 - p/2$, the following holds:
$$\normlr{\sum_{i \in \mc{B}_j} w_T^{(i)}\big(z_T^{(i)}\big)^{\top}} \leq c_1 d \sqrt{\sigma^2 M/p}.$$
\end{lemma}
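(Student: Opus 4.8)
The plan is to reduce the spectral-norm bound to a second-moment (variance) computation and then apply Markov's inequality, exactly mirroring the scalar numerator bound of Lemma~\ref{lemma:scalar_num}. Write $S \triangleq \sum_{i \in \mc{B}_j} w_T^{(i)}\big(z_T^{(i)}\big)^{\top}$. First I would observe that $\E[S]=0$: by the causal structure of \eqref{eqn:sys_model} under the zero initial condition, $x_T^{(i)}$ (and hence $z_T^{(i)}=\Sigma_x^{-1/2}x_T^{(i)}$) depends only on $w_0^{(i)},\ldots,w_{T-1}^{(i)}$, so it is independent of the fresh noise $w_T^{(i)}$; since both factors are zero-mean, each summand has zero expectation. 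Then, using $\norm{S} \leq \norm{S}_F$ and Markov's inequality applied to the nonnegative scalar $\norm{S}_F^2$, it suffices to control $\E[\norm{S}_F^2] = \E[\norm{S-\E[S]}_F^2] = \text{var}(S)$.

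Next I would invoke the additivity of the variance statistic from Lemma~\ref{lemma:var_statistic}, together with the i.i.d.\ structure across trajectories, to write $\text{var}(S) = M\,\text{var}\big(w_T z_T^{\top}\big)$ (dropping the superscript). The per-term variance equals $\E[\norm{w_T z_T^{\top}}_F^2]$, and since $\norm{w_T z_T^{\top}}_F^2 = \norm{w_T}^2 \norm{z_T}^2$, the independence of $w_T$ and $z_T$ gives $\E[\norm{w_T z_T^{\top}}_F^2] = \E[\norm{w_T}^2]\,\E[\norm{z_T}^2]$. Here $\E[\norm{w_T}^2] = \tr{\E[w_Tw_T^{\top}]} = d\sigma^2$ from the second-moment assumption in \eqref{eqn:noise_model}, while the whitening makes $\E[z_T z_T^{\top}] = \Sigma_x^{-1/2}\Sigma_x\Sigma_x^{-1/2} = I_d$, so $\E[\norm{z_T}^2] = \tr{I_d} = d$. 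Combining these, $\text{var}(S) = M d^2 \sigma^2$.

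Finally, Markov's inequality yields $\mathbb{P}(\norm{S} \geq t) \leq M d^2 \sigma^2 / t^2$; setting the right-hand side to $p/2$ gives $t = d\sigma\sqrt{2M/p}$, which is precisely the claimed bound with $c_1=\sqrt{2}$. I would also record the alternative route via the matrix Markov inequality (Lemma~\ref{lemma:matrix_markov}): applying it to the PSD matrix $SS^{\top}$ with comparison matrix $t^2 I_d$ turns the event $\{\norm{S} > t\}$ into $\{SS^{\top} \not\preceq t^2 I_d\}$, and its bound $\tr{\E[SS^{\top}](t^2 I_d)^{-1}} = \E[\norm{S}_F^2]/t^2$ coincides exactly with the quantity above, which explains why the two approaches produce identical constants. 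The only real subtlety is that the step $\norm{S}\leq\norm{S}_F$ is lossy and is responsible for one of the extra $\sqrt{d}$ factors flagged in the discussion following Lemma~\ref{lemma:vector_den}; notably, and in contrast to the denominator bound, this numerator estimate requires only the second moment of the noise rather than the fourth.
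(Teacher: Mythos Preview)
Your proposal is correct and follows essentially the same approach as the paper: bound $\norm{S}$ by $\norm{S}_F$, apply Markov's inequality to $\norm{S}_F^2$, use the additivity of the variance statistic (Lemma~\ref{lemma:var_statistic}) to reduce to a single term, and compute $\E[\norm{w_T z_T^{\top}}_F^2]=d^2\sigma^2$ via independence of $w_T$ and $z_T$ together with the whitening identity $\E[z_T z_T^{\top}]=I_d$. Your alternative via the matrix Markov inequality is also exactly the paper's ``Alternate Proof,'' and the constant $c_1=\sqrt{2}$ you extract is consistent with setting the failure probability to $p/2$.
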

\begin{proof}
    We begin by noting that $\normlr{\sum_{i \in \mc{B}_j} w_T^{(i)}\big(z_T^{(i)}\big)^{\top}}^2 \leq \normlr{\sum_{i \in \mc{B}_j} w_T^{(i)}\big(z_T^{(i)}\big)^{\top}}_F^2$. Therefore, it suffices to bound the term $\normlr{\sum_{i \in \mc{B}_j} w_T^{(i)}\big(z_T^{(i)}\big)^{\top}}_F^2$. Using Markov's inequality, we have for any $t > 0$
    \begin{align}
        \mathbb{P}\left(\normlr{\sum_{i \in \mc{B}_j} w_T^{(i)}\big(z_T^{(i)}\big)^{\top}}_F^2 \geq t^2\right) & \leq 
        \frac{\mathbb{E}\left[ \normlr{\sum_{i \in \mc{B}_j} w_T^{(i)}\big(z_T^{(i)}\big)^{\top}}_F^2\right]}{t^2} \label{eqn:vec_num_1}\\
        & \overset{(a)}{=} \frac{\sum_{i \in \mc{B}_j} \mathbb{E}\left[ \normlr{w_T^{(i)}\big(z_T^{(i)}\big)^{\top}}_F^2\right]}{t^2} \nonumber\\
        & \overset{(b)}{=} \frac{M \mathbb{E}\left[ \normlr{w_T^{(1)}\big(z_T^{(1)}\big)^{\top}}_F^2\right]}{t^2} \nonumber\\
        & \overset{(c)}{=} \frac{M \mathbb{E}\left[ \tr{w_T\big(z_T\big)^{\top}z_T\big(w_T\big)^{\top}}\right]}{t^2} \nonumber\\
        & \overset{(d)}{=} \frac{M \mathbb{E}\left[ \tr{\big(z_T\big)^{\top}z_T w_T\big(w_T\big)^{\top}}\right]}{t^2} \nonumber\\
        & \overset{(e)}{=} \frac{M \tr{\mathbb{E}\left[ \norm{z_T}^2 \right] \E\left[w_T\big(w_T\big)^{\top} \right]}}{t^2} \nonumber\\
        & \overset{(f)}{=} \frac{M d^2 \sigma^2}{t^2}. \label{eqn:frob_norm_bnd}
    \end{align}
    In the above steps, (a) follows from Lemma \ref{lemma:var_statistic}; (b) since the trajectories are identically distributed; in (c), we use the definition of the Frobenius norm and drop the superscript notation for clarity; (d) follows from moving the scalar term $\big(z_T\big)^{\top}z_T$ within the trace; (e) uses linearity of expectation and the independence of $z_T$ and $w_T$; and finally, (f) follows from our assumption on the noise process $w_T$ and the fact that $z_T$ is whitened. Setting the R.H.S equal to $p/2$ and solving for $t$, we have with probability at least $1- p/2$, 
    \begin{equation}
        \normlr{\sum_{i \in \mc{B}_j} w_T^{(i)}\big(z_T^{(i)}\big)^{\top}} \leq \normlr{\sum_{i \in \mc{B}_j} w_T^{(i)}\big(z_T^{(i)}\big)^{\top}}_F \leq  c_1 d \sqrt{\sigma^2 M/p}.
    \end{equation}
    This completes the proof. 

    \textbf{Alternate Proof.} As discussed in the main text, one could obtain the same bound by using the matrix version of Markov's inequality from Lemma \ref{lemma:matrix_markov} as follows:
    \begin{align*}
        & \mathbb{P}\left( \left(\sum_{i \in \mc{B}_j} w_T^{(i)}\big(z_T^{(i)}\big)^{\top}\right) \left(\sum_{i \in \mc{B}_j} w_T^{(i)}\big(z_T^{(i)}\big)^{\top}\right)^{\top} \not \preceq t^2 I \right) \\
        \quad & \overset{(a)}{\leq} \frac{\tr{\E\left[ \left(\sum_{i \in \mc{B}_j} w_T^{(i)}\big(z_T^{(i)}\big)^{\top}\right) \left(\sum_{i \in \mc{B}_j} w_T^{(i)}\big(z_T^{(i)}\big)^{\top}\right)^{\top}\right]}}{t^2} \\
        & \overset{(b)}{=} \frac{\E\left[ \tr{\left(\sum_{i \in \mc{B}_j} w_T^{(i)}\big(z_T^{(i)}\big)^{\top}\right) \left(\sum_{i \in \mc{B}_j} w_T^{(i)}\big(z_T^{(i)}\big)^{\top}\right)^{\top}}\right]}{t^2} \\
        & \overset{(c)}{=} \frac{\mathbb{E}\left[ \normlr{\sum_{i \in \mc{B}_j} w_T^{(i)}\big(z_T^{(i)}\big)^{\top}}_F^2\right]}{t^2} \\
        & \overset{(d)}{=} \frac{M d^2 \sigma^2}{t^2}.
    \end{align*}
    In the above steps, (a) follows from \eqref{eqn:matrix_markov}, (b) by commuting the trace and expectation operators, (c) by using the definition of the Frobenius norm, and (d) from the derivations leading up to \eqref{eqn:frob_norm_bnd}. Notice that we have obtained the same upper-bound on the failure probability as in our prior analysis involving the variance statistic. Equating the failure probability to $p/2$, we have with probability at least $1 - p/2$, 
    \begin{align*}
        \left(\sum_{i \in \mc{B}_j} w_T^{(i)}\big(z_T^{(i)}\big)^{\top}\right) \left(\sum_{i \in \mc{B}_j} w_T^{(i)}\big(z_T^{(i)}\big)^{\top}\right)^{\top} &\preceq \left(c_1d \sqrt{\sigma^2 M/p}\right)^2 I \\
        \implies \normlr{\sum_{i \in \mc{B}_j} w_T^{(i)}\big(z_T^{(i)}\big)^{\top}} & \leq c_1 d \sqrt{\sigma^2 M/p}.
    \end{align*}
    
\end{proof}

Before proving Lemma \ref{lemma:vector_den}, which provides a lower-bound for $\lambda_{\min}{\left(\sum_{i \in \mc{B}_j} z_T^{(i)}\big(z_T^{(i)}\big)^{\top}\right)}$, we prove Lemma \ref{lemma:vec_den_biquadratic_terms} which is a key ingredient in the proof of Lemma \ref{lemma:vector_den}.

\begin{lemma} Define $n_t \triangleq w_{T-(t+1)}.$ Given the system in \eqref{eqn:sys_model} and the noise assumptions in \eqref{eqn:noise_model}, we have
    \begin{equation}
    \resizebox{0.8\hsize}{!}{$
    \begin{aligned}
\E\left[(x_Tx_T^{\top})^2\right] &= \sum_{t=0}^{T-1} \E\left[(A^tn_tn_t^{\top}(A^t)^{\top})^2\right] 
         + 2\sum_{s\neq t=0}^{T-1} \E\left[ A^tn_tn_t^{\top}(A^t)^{\top} A^sn_sn_s^{\top}(A^s)^{\top}\right]\\ 
         &+ \sum_{s\neq t=0}^{T-1} \E\left[ A^tn_tn_s^{\top}(A^s)^{\top} A^sn_sn_t^{\top}(A^t)^{\top}\right]. 
         \nonumber 
\end{aligned}$}
    \end{equation}
\end{lemma}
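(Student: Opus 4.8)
The plan is to expand $x_T$ under the zero initial condition as $x_T = \sum_{t=0}^{T-1} A^t n_t$ with $n_t \triangleq w_{T-(t+1)}$, so that $x_T x_T^{\top} = \sum_{t,s=0}^{T-1} A^t n_t n_s^{\top}(A^s)^{\top}$ and hence $(x_T x_T^{\top})^2 = \sum_{t,s,u,v=0}^{T-1} A^t n_t n_s^{\top}(A^s)^{\top} A^u n_u n_v^{\top}(A^v)^{\top}$. I would then push the expectation inside this finite quadruple sum and exploit the two structural facts granted by the noise model \eqref{eqn:noise_model}: the $n_t$ are mutually independent and zero-mean, and products over distinct indices factorize. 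The key reduction is that a summand survives the expectation only if \emph{every} distinct noise index among $\{t,s,u,v\}$ appears an even number of times; otherwise some $n_\ell$ sits alone in an independent factor and $\E[n_\ell]=0$ annihilates the term.

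With four indices, the only surviving patterns are (i) all four equal, and (ii) the three perfect matchings of $\{t,s,u,v\}$ into two pairs carrying distinct values. A clean way to organize the bookkeeping is to split $x_T x_T^{\top} = D + O$, where $D = \sum_t A^t n_t n_t^{\top}(A^t)^{\top}$ collects the diagonal ($s=t$) terms and $O = \sum_{s\neq t} A^t n_t n_s^{\top}(A^s)^{\top}$ the off-diagonal ones, and to expand $\E[(D+O)^2] = \E[D^2] + \E[DO] + \E[OD] + \E[O^2]$. I would first show $\E[DO] = \E[OD] = 0$: in each such summand at least one of the off-diagonal indices appears an odd number of times, so the term vanishes by independence and zero mean. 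Then $\E[D^2]$, split according to whether the two diagonal indices coincide, contributes exactly the all-equal term $\sum_t \E[(A^t n_t n_t^{\top}(A^t)^{\top})^2]$ (pattern (i)) together with the matching $t{=}s\neq u{=}v$; while $\E[O^2]$, where $t\neq s$ and $u\neq v$ force $\{t,s\}=\{u,v\}$ as sets, contributes the two remaining matchings $t{=}u\neq s{=}v$ and $t{=}v\neq s{=}u$.

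The step I expect to be the main (if modest) obstacle is accounting for the coefficient $2$ in the second term. The matchings $t{=}s\neq u{=}v$ (from $\E[D^2]$) and $t{=}u\neq s{=}v$ (from $\E[O^2]$) are combinatorially \emph{different} expressions, yet I would verify that both evaluate to the same matrix: factoring the expectation over the two independent pairs and using $\E[n_t n_t^{\top}] = \sigma^2 I$ reduces each to $\sigma^4 A^t (A^t)^{\top} A^s (A^s)^{\top}$, which is precisely why they merge into $2\sum_{s\neq t}\E[A^t n_t n_t^{\top}(A^t)^{\top} A^s n_s n_s^{\top}(A^s)^{\top}]$. The last matching $t{=}v\neq s{=}u$ keeps its ``crossed'' form and yields the sandwiched third term. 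Two things deserve care throughout: that the surviving-pattern argument needs \emph{only} independence and vanishing odd moments (never Gaussianity), and that the genuinely fourth-moment-dependent contribution is confined to the all-equal term, which is therefore left unevaluated --- exactly the form needed downstream to invoke the fourth-moment assumption when bounding $\tr{\E[(x_T x_T^{\top})^2]}$ in Lemma~\ref{lemma:vector_den}.
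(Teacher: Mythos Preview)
Your proposal is correct and follows essentially the same approach as the paper: both decompose $x_T x_T^{\top}$ into its diagonal part $D$ (the paper's $T_1=T_3$) and off-diagonal part $O$ (the paper's $T_2=T_4$), show the cross terms $\E[DO]=\E[OD]=0$ via independence and zero mean, and then identify the surviving index-matchings in $\E[D^2]$ and $\E[O^2]$. Your explicit verification that the matchings $\{t{=}s\neq u{=}v\}$ and $\{t{=}u\neq s{=}v\}$ both reduce to $\sigma^4 A^t(A^t)^{\top} A^s(A^s)^{\top}$ --- thereby producing the factor $2$ --- is exactly the step the paper handles (somewhat implicitly) when compiling $\E[T_1T_3]$ and $\E[T_2T_4]$.
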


\begin{proof}
    We have $x_T = \sum_{t = 0}^{T-1} A^tn_t$. In what follows, we simplify the summation notation by dropping the ranges, which vary from $0$ to $T-1$, and preserving only the relation between the indices for clarity. $\E\left[(x_Tx_T^{\top})^2\right]$ can be expressed as
    \begin{align*}
        \E\left[(x_Tx_T^{\top})^2\right] &= \E\left[ \left( \sum_{k}A^kn_k \sum_{l}n_l^{\top}(A^l)^{\top} \right) \left( \sum_{s}A^sn_s \sum_{t}n_t^{\top}(A^t)^{\top} \right)\right] \\
        &= \E \left( \underbrace{\sum_k A^kn_kn_k^{\top}(A^k)^{\top}}_{T_1} + \underbrace{\sum_{k\neq l} A^kn_kn_l^{\top}(A^l)^{\top}}_{T_2} \right) \\
        & \times \left( \underbrace{\sum_s A^sn_sn_s^{\top}(A^s)^{\top}}_{T_3} + \underbrace{\sum_{s\neq t} A^sn_sn_t^{\top}(A^t)^{\top}}_{T_4} \right).
    \end{align*}
    Next, we expand the above expression, and analyze the result of taking the expectation on each of the four terms that arise in the above product, starting with the term $T_1 \times T_3.$ We have 
    \begin{align*}
        \E[T_1 \times T_3] = \sum_{k} \E\left[(A^kn_kn_k^{\top}(A^k)^{\top})^2\right] + \sum_{k\neq s} \E\left[ A^kn_kn_k^{\top}(A^k)^{\top} A^sn_sn_s^{\top}(A^s)^{\top}\right]. 
    \end{align*}
    In the following, as a result of the noise process being i.i.d. with zero mean, we show that the term $T_1 \times T_4$ does not survive the expectation.
    \begin{align*}
        \E[T_1 \times T_4] & = \sum_{s \neq t} \sum_k \E\left[ A^kn_kn_k^{\top}(A^k)^{\top} A^sn_sn_t^{\top}(A^t)^{\top}\right] \\
        & \overset{(a)}{=}  \sum_k \sum_{s \neq t = k} \E\left[ A^kn_kn_k^{\top}(A^k)^{\top} A^sn_sn_k^{\top}(A^k)^{\top}\right] \\
        & + \sum_k \sum_{t \neq s = k} \E\left[ A^kn_kn_k^{\top}(A^k)^{\top} A^kn_kn_t^{\top}(A^t)^{\top}\right] \\
        & + \sum_k \sum_{t \neq s \neq k} \E\left[ A^kn_kn_k^{\top}(A^k)^{\top} A^sn_sn_t^{\top}(A^t)^{\top}\right] = 0.
    \end{align*}
    Note that all the three expectations in (a) evaluate to a $0$ matrix. For example, in the first term of (a), the embedded scalar term $n_k^{\top}(A^k)^{\top} A^sn_s$ can be moved to the right end, yielding: 
    $$ \E\left[ A^kn_kn_k^{\top}(A^k)^{\top} A^sn_sn_k^{\top}(A^k)^{\top}\right] = \E\left[ A^kn_k n_k^{\top}(A^k)^{\top} n_k^{\top}(A^k)^{\top}\right]  \E\left[ A^sn_s \right] =0,$$
where we used the fact that $n_k$ and $n_s$ are independent, and $\E[n_s]=0.$ A similar argument can be used to conclude that the second and third terms in (a) also evaluate to $0$. 

    Similarly, since $\E[T_1 \times T_4] = \E[T_2 \times T_3]$, we conclude that $\E[T_1 \times T_4] = 0$. Next, we analyze the term $\E[T_2 \times T_4]$ in the following:
    \begin{align*}
        \E[T_2 \times T_4] & = \sum_{k \neq l} \sum_{s \neq t} \E\left[A^kn_kn_l^{\top}(A^l)^{\top} A^sn_sn_t^{\top}(A^t)^{\top}\right] \\
        & \overset{(a)}{=} \sum_{k\neq l} \E\left[ A^kn_kn_k^{\top}(A^k)^{\top} A^ln_ln_l^{\top}(A^l)^{\top}\right] \\
        & + \sum_{k\neq l} \E\left[ A^kn_kn_l^{\top}(A^l)^{\top} A^ln_ln_k^{\top}(A^k)^{\top}\right] \\
        & \overset{(b)}{+} \sum_{l \neq k = t \neq s \neq l} \E\left[ A^kn_kn_l^{\top}(A^l)^{\top} A^sn_sn_k^{\top}(A^k)^{\top}\right] \\
        & + \sum_{l \neq k = s \neq t \neq l} \E\left[ A^kn_kn_l^{\top}(A^l)^{\top} A^kn_kn_t^{\top}(A^t)^{\top}\right] \\
        & + \sum_{k \neq l = t \neq s \neq k} \E\left[ A^kn_kn_l^{\top}(A^l)^{\top} A^sn_sn_l^{\top}(A^l)^{\top}\right] \\
        & + \sum_{k \neq l = s \neq t \neq k} \E\left[ A^kn_kn_l^{\top}(A^l)^{\top} A^ln_ln_t^{\top}(A^t)^{\top}\right] \\
        & + \sum_{k \neq l \neq s \neq t \neq k, s \neq k, t \neq l} \E\left[ A^kn_kn_l^{\top}(A^l)^{\top} A^sn_sn_t^{\top}(A^t)^{\top}\right].
    \end{align*}
    In the above display, it is not hard to verify that except the first and second terms of (a), all the remaining terms starting from (b) evaluate to zero. To see this, notice that the embedded scalar terms of the form $n^{\top}A^{\top} An$ can be flipped or moved within the expectation. Then, using the i.i.d assumption and independence, one can verify the aforementioned claim. For completeness, we show how this can be done for the first and second terms starting from (b); the rest follow similarly. For the first term, we have
$$ \E\left[ A^kn_k n_l^{\top}(A^l)^{\top} A^sn_sn_k^{\top}(A^k)^{\top}\right] =  \E\left[ A^kn_k n_k^{\top} (A^k)^{\top}\right] \E\left[ n_l^{\top}(A^l)^{\top} \right] \E\left[ A^sn_s  \right] =0,$$
where we used independence of $n_k, n_l,$ and $n_s$, and $\E\left[ n_s \right]=0.$ Similarly, for the second term,  
\begin{align*}
    \E\left[ A^kn_kn_l^{\top}(A^l)^{\top} A^kn_kn_t^{\top}(A^t)^{\top}\right] &= \E\left[ A^kn_k n_k^{\top}(A^k)^{\top} A^l n_l n_t^{\top}(A^t)^{\top}\right] \\
    &= \E\left[ A^kn_k n_k^{\top} (A^k)^{\top}\right] \E\left[ A^l n_l  \right] \E\left[ n_t^{\top}(A^t)^{\top} \right],
\end{align*}
which evaluates to $0$, since $\E \left [ n_l \right] =0.$ Proceeding similarly, we conclude that the terms starting from (b) do not survive the expectation. Therefore, we have
    \begin{align*}
        \E[T_2 \times T_4] & = \sum_{k=s \neq l=t} \E\left[ A^kn_kn_k^{\top}(A^k)^{\top} A^ln_ln_l^{\top}(A^l)^{\top}\right] \\
        & + \sum_{k=t \neq l=s} \E\left[ A^kn_kn_l^{\top}(A^l)^{\top} A^ln_ln_k^{\top}(A^k)^{\top}\right].
    \end{align*}
    Finally, compiling the results of $\E[T_1 \times T_3]$ and $\E[T_2 \times T_4]$, we obtain the desired claim in the lemma.
\end{proof}

Equipped with the Lemma \ref{lemma:vec_den_biquadratic_terms}, we are now ready to prove the lower-bound for $\lambda_{\min}{\left(\sum_{i \in \mc{B}_j} z_T^{(i)}\big(z_T^{(i)}\big)^{\top}\right)}$ as stated in Lemma \ref{lemma:vector_den}.

\begin{lemma} (\textbf{Vector denominator lower bound}) For each bucket $j$, with probability at least $1 - p/2$, $$\lambda_{\min}\left(\sum_{i \in \mc{B}_j} z_T^{(i)}\big(z_T^{(i)}\big)^{\top}\right) \geq M/2, \hspace{2mm} \textrm{provided} \hspace{2mm} M \geq c (d^2/p) C_A C_w.$$
\end{lemma}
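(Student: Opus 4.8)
The plan is to reduce the smallest-eigenvalue bound to a concentration statement about the deviation $S - \E[S]$, where $S \triangleq \sum_{i \in \mc{B}_j} z_T^{(i)}\big(z_T^{(i)}\big)^{\top}$. First I would note that whitening makes $\E[z_T^{(i)}\big(z_T^{(i)}\big)^{\top}] = \Sigma_x^{-1/2}\Sigma_x\Sigma_x^{-1/2} = I_d$, so that $\E[S] = M I_d$ and $\lambda_{\min}(\E[S]) = M$. By Weyl's inequality, $\lambda_{\min}(S) \geq \lambda_{\min}(\E[S]) - \normlr{S - \E[S]} = M - \normlr{S-\E[S]}$, so it suffices to show $\normlr{S - \E[S]} \leq M/2$ with probability at least $1 - p/2$. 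Passing to the Frobenius norm via $\normlr{S - \E[S]} \leq \normlr{S-\E[S]}_F$ then lets me reuse the variance machinery of Lemma~\ref{lemma:var_statistic}.

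Next I would apply Markov's inequality to $\normlr{S - \E[S]}_F^2$. Since the summands $X_i = z_T^{(i)}\big(z_T^{(i)}\big)^{\top}$ are i.i.d., Lemma~\ref{lemma:var_statistic} gives $\E[\normlr{S - \E[S]}_F^2] = \text{var}(S) = M\,\text{var}(X_1) \leq M\,\E[\normlr{X_1}_F^2] = M\,\E[\norm{z_T}^4]$, where the final equality uses the rank-one identity $\normlr{z_Tz_T^{\top}}_F^2 = \tr{(z_Tz_T^{\top})^2} = \norm{z_T}^4$. Taking the threshold $M/2$ yields $\mathbb{P}(\normlr{S-\E[S]}_F \geq M/2) \leq 4\,\E[\norm{z_T}^4]/M$, so the entire problem collapses to bounding the single scalar moment $\E[\norm{z_T}^4]$; setting this failure probability equal to $p/2$ will pin down the requirement on $M$.

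The core of the argument is therefore estimating $\E[\norm{z_T}^4]$. Undoing the whitening with $z_T = \Sigma_x^{-1/2}x_T$ and $\Sigma_x = \sigma^2 G_T$, I would use $x_T^{\top}\Sigma_x^{-1}x_T \leq \norm{x_T}^2/\lambda_{\min}(\Sigma_x)$ together with the rank-one identity~\eqref{eqn:tr_a_sq} to obtain $\E[\norm{z_T}^4] = \E[(x_T^{\top}\Sigma_x^{-1}x_T)^2] \leq \lambda_{\min}(\Sigma_x)^{-2}\,\E[\norm{x_T}^4] = \lambda_{\min}(\Sigma_x)^{-2}\,\tr{\E[(x_Tx_T^{\top})^2]}$. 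This is precisely where Lemma~\ref{lemma:vec_den_biquadratic_terms} enters: I would take the trace of each of its three surviving groups. For the first (biquadratic) group, $\tr{\E[(A^tn_tn_t^{\top}(A^t)^{\top})^2]} = \E[\norm{A^tn_t}^4] \leq \norm{A^t}^4\,\E[\norm{n_t}^4] \leq d^2\tilde{\sigma}^4\norm{A^t}^4$, where the last step applies Cauchy--Schwarz to the mixed moments $\E[w_t(i)^2w_t(j)^2] \leq \tilde{\sigma}^4$ because the coordinates of $w_t$ are not assumed independent. For the two cross groups, independence of $n_t$ and $n_s$ factors the expectations into products $\sigma^2 A^t(A^t)^{\top}$, leaving traces of the form $\sigma^4\tr{A^t(A^t)^{\top}A^s(A^s)^{\top}}$ and $\sigma^4\norm{A^s}_F^2\,\tr{A^t(A^t)^{\top}}$, which I bound using~\eqref{eqn:tr_ab_eq_tra_trb} and $\norm{\cdot}_F^2 \leq d\norm{\cdot}^2$. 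Summing over $t$ and $s \neq t$, and using $\tilde{\sigma}^4 \geq \sigma^4$ with $\sum_t\norm{A^t}^4 \leq (\sum_t\norm{A^t}^2)^2$, all three groups combine into $\tr{\E[(x_Tx_T^{\top})^2]} \leq C d^2\tilde{\sigma}^4\big(\sum_{t=0}^{T-1}\norm{A^t}^2\big)^2$.

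Assembling these pieces with $\lambda_{\min}(\Sigma_x) = \sigma^2\lambda_{\min}(G_T)$ gives $\E[\norm{z_T}^4] \leq C d^2(\tilde{\sigma}^4/\sigma^4)\big(\sum_t\norm{A^t}^2/\lambda_{\min}(G_T)\big)^2 = C d^2 C_A C_w$ with $C_A, C_w$ as in~\eqref{eqn:vector_sample_comp}. The failure bound $4\E[\norm{z_T}^4]/M \leq p/2$ then holds once $M \geq c(d^2/p)C_A C_w$, which is exactly the claimed threshold. I expect the main obstacle to be the bookkeeping in the step invoking Lemma~\ref{lemma:vec_den_biquadratic_terms}: keeping the dimension factors tight across the three term-groups, and handling the biquadratic group's dependence on the coordinate-wise fourth moment without assuming independence of the entries of $w_t$ (which forces the Cauchy--Schwarz step and is precisely the source of the kurtosis factor $C_w = \tilde{\sigma}^4/\sigma^4$).
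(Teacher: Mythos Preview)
Your proposal is correct and follows essentially the same route as the paper: both reduce to controlling $\normlr{S-\E[S]}_F$ via Markov's inequality and Lemma~\ref{lemma:var_statistic}, pass from $\E[\norm{z_T}^4]$ to $\tr{\E[(x_Tx_T^{\top})^2]}/\lambda_{\min}(\Sigma_x)^2$, and then bound the three trace groups from Lemma~\ref{lemma:vec_den_biquadratic_terms} to arrive at $Cd^2\tilde{\sigma}^4\big(\sum_t\norm{A^t}^2\big)^2$. Your use of Cauchy--Schwarz for the cross-coordinate moments $\E[w_t(i)^2w_t(j)^2]\leq\tilde{\sigma}^4$ is in fact slightly more careful than the paper's step~(c), which implicitly treats these as $\sigma^4$; both yield the same $d^2\tilde{\sigma}^4$ bound on $\E[\norm{n_t}^4]$.
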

\begin{proof}
    Since $\E\left[ z_T^{(i)}\big(z_T^{(i)}\big)^{\top} \right] = I \ \forall i \in \mc{B}_j$, we have 
    \begin{align}
        \sum_{i \in \mc{B}_j}  z_T^{(i)}\big(z_T^{(i)}\big)^{\top} &= \sum_{i \in \mc{B}_j} \E\left[ z_T^{(i)}\big(z_T^{(i)}\big)^{\top}\right] + \sum_{i \in \mc{B}_j} \left( z_T^{(i)}\big(z_T^{(i)}\big)^{\top} - \E\left[ z_T^{(i)}\big(z_T^{(i)}\big)^{\top}\right] \right) \nonumber \\
        &= M I + \sum_{i \in \mc{B}_j} \left( z_T^{(i)}\big(z_T^{(i)}\big)^{\top} - I \right) \label{eqn:vec_den_good_error}.
    \end{align}
    Based on the above, we have $\lambda_{\min}{\left(\sum_{i \in \mc{B}_j} z_T^{(i)}\big(z_T^{(i)}\big)^{\top}\right)} = M + \lambda_{\min}{\left(\sum_{i \in \mc{B}_j} \left( z_T^{(i)}\big(z_T^{(i)}\big)^{\top} - I \right)\right)}.$
    Therefore, it suffices to obtain a lower-bound for $\lambda_{\min}{\left(\sum_{i \in \mc{B}_j} \left( z_T^{(i)}\big(z_T^{(i)}\big)^{\top} - I \right)\right)}$. To do so, similar to the procedure followed in the proof of Lemma \ref{lemma:vector_num}, one could either use the variance statistic from Lemma \ref{lemma:var_statistic} or the matrix version of Markov's inequality presented in Lemma \ref{lemma:matrix_markov}. Both the methods lead to a similar analysis, and in the following, we proceed with the latter approach. We define $X \triangleq \sum_{i \in \mc{B}_j} \left( z_T^{(i)}\big(z_T^{(i)}\big)^{\top} - I \right)$ for brevity. We have
    \begin{align}
        \mathbb{P}\left( XX^{\top} \not \preceq t^2 I \right) & \leq \frac{\tr{\E[XX^{\top}]}}{t^2} \nonumber \\
        & \overset{(a)}{=} \frac{\E[\norm{X}_F^2]}{t^2} \nonumber \\
        & \overset{(b)}{=} \frac{M \tr{\E\left[\left(z_T^{(1)}\big(z_T^{(1)}\big)^{\top} - I\right)^2\right]}}{t^2}, \label{eqn:vec_den_cheby}
    \end{align}
    where (a) follows by commuting trace and expectation operators and using the definition of Frobenius norm, and (b) follows from Lemma \ref{lemma:var_statistic} due to the i.i.d. nature of the trajectories. In the following, we drop the superscript notation for clarity. Furthermore, in the analysis that follows, we exploit the linearity of trace and expectation at several points as $\tr{\E[x]} = \E[\tr{x}]$ where $x$ is any random matrix in $\mathbb{R}^{d \times d}$. We also use the cyclic property of trace at several steps. Now, we derive an upper-bound for $\tr{\E\left[\left(z_T z_T^{\top} - I\right)^2\right]}$ as follows: 
    \begin{align}
        \tr{\E\left[\left(z_T z_T^{\top} - I\right)^2\right]} & \overset{(a)}{=} \tr{\E\left[\left(z_T z_T^{\top}\right)^2\right]} - d \nonumber\\
        & \leq \tr{\E\left[\left(z_T z_T^{\top}\right)^2\right]} \nonumber\\
        & = \E\left[\tr{\left(z_T z_T^{\top}\right)^2}\right] \nonumber\\
        & \overset{\eqref{eqn:tr_a_sq}}{=} \E\left[\left(\tr{z_T z_T^{\top}}\right)^2\right] \nonumber\\
        & = \E\left[\left(\tr{\Sigma_x^{-1/2}x_T x_T^{\top}\Sigma_x^{-1/2}}\right)^2\right] = \E\left[\left(\tr{\Sigma_x^{-1}x_T x_T^{\top}}\right)^2\right] \nonumber\\
        &  \overset{\eqref{eqn:tr_ab_tr_a_lambda_b}}{=} \E\left[ \left( \lambda_{\max}{(\Sigma_x^{-1})}\tr{x_T x_T^{\top}}\right)^2\right] \nonumber\\
        & = \frac{\E\left[ \left(\tr{x_T x_T^{\top}}\right)^2\right]}{\left(\lambda_{\min}{(\Sigma_x)}\right)^2} \nonumber\\
        & \overset{\eqref{eqn:tr_a_sq}}{=} \frac{\E\left[\tr{\left(x_T x_T^{\top}\right)^2}\right]}{\left(\lambda_{\min}{(\Sigma_x)}\right)^2} \nonumber\\
        & = \frac{\tr{\E\left[\left(x_T x_T^{\top}\right)^2\right]}}{\left(\lambda_{\min}{(\Sigma_x)}\right)^2}.\label{eqn:vec_den_trace_bound}
    \end{align}
    In the above steps, (a) follows since $\E\left[z_T z_T^{\top}\right] = I$. Next, based on the last step, we obtain an upper-bound on $\tr{\E\left[\left(x_T x_T^{\top}\right)^2\right]}$. From Lemma \ref{lemma:vec_den_biquadratic_terms}, we have
     \begin{align}
    \E\left[(x_Tx_T^{\top})^2\right] &= \underbrace{\sum_{t=0}^{T-1} \E\left[(A^tn_tn_t^{\top}(A^t)^{\top})^2\right]}_{T_1}
         + 2\underbrace{\sum_{s\neq t=0}^{T-1} \E\left[ A^tn_tn_t^{\top}(A^t)^{\top} A^sn_sn_s^{\top}(A^s)^{\top}\right]}_{T_2} \nonumber\\ 
         &+ \underbrace{\sum_{s\neq t=0}^{T-1} \E\left[ A^tn_tn_s^{\top}(A^s)^{\top} A^sn_sn_t^{\top}(A^t)^{\top}\right]}_{T_3}. 
         \label{eqn:vec_den_t1_t2_t3} 
    \end{align}
In the following, we bound the the trace of terms in the above equation starting with the term $T_1$. we have
\begin{align}
    \tr{T_1} &= \sum_{t=0}^{T-1} \E\left[\tr{(A^tn_tn_t^{\top}(A^t)^{\top})^2}\right] \nonumber\\
    &\overset{\eqref{eqn:tr_a_sq}}{=} \sum_{t=0}^{T-1} \E\left[\left(\tr{A^tn_tn_t^{\top}(A^t)^{\top}}\right)^2\right] \nonumber\\
    & \overset{\eqref{eqn:tr_ab_tr_a_lambda_b}}{\leq} \sum_{t=0}^{T-1} \E\left[\left(\tr{n_tn_t^{\top}}\right)^2\big(\lambda_{\max}(A^t(A^t)^{\top})\big)^2\right] \nonumber\\
    & = \sum_{t=0}^{T-1} \E\left[\left(\tr{n_t^{\top}n_t}\right)^2\big(\lambda_{\max}(A^t(A^t)^{\top})\big)^2\right] \nonumber\\
    &\overset{(a)}{=} \E\left[\norm{n_1}^4\right]\sum_{t=0}^{T-1} \norm{A^t}^4 \nonumber\\
    &\overset{(b)}{\leq} d^2 \tilde{\sigma}^4 \sum_{t=0}^{T-1} \norm{A^t}^4 \label{eqn:t1bound}.
\end{align}
In the above steps, (a) follows as the noise sequence is identically distributed for all $t$, and (b) follows by observing that 
\begin{align*}
    \E[\norm{n_1}^4] = \E\left[ \left(\sum_{i = 1}^d n_1(i)^2 \right)^2 \right] &= \E\left[\sum_{i = 1}^d n_1(i)^4 \right]+\E\left[\sum_{i \neq j = 1}^d n_1(i)^2 n_1(j)^2 \right] \\
    &\overset{(c)}{=} d\tilde{\sigma}^4 + (d^2 - d)\sigma^4 \\
    &\overset{(d)}{\leq} d^2 \tilde{\sigma}^4,
\end{align*}
where (c) follows based on the noise assumptions \eqref{eqn:noise_model}, and (d) as $\tilde{\sigma}^4 \geq \sigma^4$ due to Jensen's inequality. Similarly, we bound the trace of the term $T_2$ from \eqref{eqn:vec_den_t1_t2_t3} in the following.

\begin{align}
    \tr{T_2} &= \tr{\sum_{s\neq t=0}^{T-1} \E\left[ A^tn_tn_t^{\top}(A^t)^{\top} A^sn_sn_s^{\top}(A^s)^{\top}\right]} \nonumber\\
    &\overset{(a)}{=} \sum_{s\neq t=0}^{T-1} \tr{A^t\E[n_tn_t^{\top}](A^t)^{\top} A^s\E[n_sn_s^{\top}](A^s)^{\top}} \nonumber\\
    &\overset{(b)}{=} \sigma^4 \sum_{s\neq t=0}^{T-1}  \tr{A^t(A^t)^{\top} A^s(A^s)^{\top}} \nonumber\\
    &\overset{\eqref{eqn:tr_ab_eq_tra_trb}}{\leq} \sigma^4 \sum_{s\neq t=0}^{T-1} \tr{A^t(A^t)^{\top}} \tr{A^s(A^s)^{\top}} \nonumber\\
    &\overset{\eqref{eqn:tr_a_d_lambda}}{\leq} d^2 \sigma^4 \sum_{s\neq t=0}^{T-1} \lambda_{\max}(A^t(A^t)^{\top}) \lambda_{\max}(A^s(A^s)^{\top}) \nonumber\\
    & \overset{(c)}{\leq} d^2 \tilde{\sigma}^4 \sum_{s\neq t=0}^{T-1} \norm{A^t}^2 \norm{A^s}^2. \label{eqn:t2bound}
\end{align}
In the above steps, (a) follows from independence of $n_t$ and $n_s$, (b) from the noise assumptions, and (c) as $\tilde{\sigma}^4 \geq \sigma^4$ due to Jensen's inequality. Next, we bound the trace of the term $T_3$ from \eqref{eqn:vec_den_t1_t2_t3} as follows:
\begin{align}
    \tr{T_3} &= \tr{\sum_{s\neq t=0}^{T-1} \E\left[ A^tn_tn_s^{\top}(A^s)^{\top} A^sn_sn_t^{\top}(A^t)^{\top}\right]} \nonumber \\
    &= \sum_{s\neq t=0}^{T-1} \E\left[\tr{A^tn_tn_s^{\top}(A^s)^{\top} A^sn_sn_t^{\top}(A^t)^{\top}}\right] \nonumber\\
    &\overset{(a)}{\leq} \sum_{s\neq t=0}^{T-1} \E\left[\lambda_{\max}(A^s(A^s)^{\top})\norm{n_s}^2\tr{A^tn_tn_t^{\top}(A^t)^{\top}}\right] \nonumber\\
    & \overset{(b)}{=} \sum_{s\neq t=0}^{T-1} \lambda_{\max}(A^s(A^s)^{\top})\E\left[\norm{n_s}^2\right]\tr{A^t\E\left[n_tn_t^{\top}\right](A^t)^{\top}} \nonumber\\
    &\overset{(c)}{=} \sum_{s\neq t=0}^{T-1} d \sigma^2 \norm{A^s}^2 \sigma^2 \tr{A^t(A^t)^{\top}} \nonumber\\
    &\overset{\eqref{eqn:tr_a_d_lambda}}{\leq} d^2  \sigma^4 \sum_{s\neq t=0}^{T-1} \norm{A^t}^2 \norm{A^s}^2 \nonumber \\
    &\overset{(d)}{\leq} d^2  \tilde{\sigma}^4 \sum_{s\neq t=0}^{T-1} \norm{A^t}^2 \norm{A^s}^2. \label{eqn:t3bound} 
\end{align}
In the above steps, (a) follows from the Rayleigh-Ritz theorem (\cite{horn2012matrix}, Theorem 4.2.2), (b) follows due to independence, (c) from the noise assumptions in \eqref{eqn:noise_model}, and (d) as $\tilde{\sigma}^4 \geq \sigma^4$ due to Jensen's inequality.

Combining the results from \eqref{eqn:t1bound}, \eqref{eqn:t2bound}, and \eqref{eqn:t3bound}, we have
\begin{align*}
    \tr{\E\left[\left(x_T x_T^{\top}\right)^2\right]} &= \tr{T_1} + 2\tr{T_2} + \tr{T_3} \\
    &\leq d^2 \tilde{\sigma}^4 \left(\sum_{t=0}^{T-1} \norm{A^t}^4 + 3\sum_{s\neq t=0}^{T-1} \norm{A^t}^2 \norm{A^s}^2 \right) \\
    &\leq 3d^2 \tilde{\sigma}^4 \left(\sum_{t=0}^{T-1} \norm{A^t}^2\right)^2.
\end{align*}
Plugging this bound in \eqref{eqn:vec_den_trace_bound}, we get $$\tr{\E\left[\left(z_T z_T^{\top} - I\right)^2\right]} \leq \frac{3d^2 \tilde{\sigma}^4 \left(\sum_{t=0}^{T-1} \norm{A^t}^2\right)^2}{\left(\lambda_{\min}{(\Sigma_x)}\right)^2}.$$
Next, plugging the above bound in \eqref{eqn:vec_den_cheby}, we have 
\begin{align}
    \mathbb{P}\left( XX^{\top} \not \preceq t^2 I \right) & \leq \frac{M 3d^2 \tilde{\sigma}^4 \left(\sum_{t=0}^{T-1} \norm{A^t}^2\right)^2}{\left(\lambda_{\min}{(\Sigma_x)}\right)^2t^2} \nonumber \\
    & \overset{(a)}{=} \frac{M 3d^2 \tilde{\sigma}^4 \left(\sum_{t=0}^{T-1} \norm{A^t}^2\right)^2}{\sigma^4\left(\lambda_{\min}{(G_T)}\right)^2t^2} \nonumber \\
    & \overset{(b)}{=} C_AC_w \frac{3Md^2}{t^2}, \nonumber
\end{align}
where (a) follows as $\Sigma_x = \E[x_Tx_T^{\top}] = \sigma^2 G_T$, and, in (b), we have used the definition of ${C_A} \triangleq \left(\frac{\sum_{t=0}^{T-1} \norm{A^t}^2}{\lambda_{\min}(G_T)}\right)^2$, and $C_w \triangleq \frac{\tilde{\sigma}^4}{\sigma^4}$ from Theorem \ref{thm:vector_case}. Choosing $t = M/2$ and setting the R.H.S $\leq p/2$, we get the desired requirement for $M = 24(d^2/p)C_AC_w$. Notice that on a successful event, with probability at least $1 - p/2$, we have $XX^{\top} \preceq \frac{M^2}{4} I$, which implies $\norm{X} \leq M/2$, which further implies $$\lambda_{\min}\left(\sum_{i \in \mc{B}_j} \left( z_T^{(i)}\big(z_T^{(i)}\big)^{\top} - I \right)\right) \geq -M/2.$$ Moreover, based on \eqref{eqn:vec_den_good_error}, we conclude that $\lambda_{\min}\left(\sum_{i \in \mc{B}_j} z_T^{(i)}\big(z_T^{(i)}\big)^{\top}\right) \geq M/2$. This completes the proof of Lemma \ref{lemma:vector_den}.
\end{proof}

Combining the results from Lemma~\ref{lemma:vector_num} and Lemma~\ref{lemma:vector_den} by applying union bound, and noting that $\norm{\Sigma_x^{-1/2}} \leq 1/\sqrt{\sigma^2\lambda_{\min}(G_T)}$, we have the following bound with probability at least $1-p$:
\begin{equation}
   \norm{\hat{A}_j - A} \leq c d\sqrt{\frac{1}{p M \lambda_{\min}(G_T)}},
   \label{eqn:vector_bound_pre_boost}
\end{equation}
when $M \geq c_2 (1/p) d^2 C_w C_A$. Similar to the scalar case, the following lemma improves the above bound using the boosting step.

\begin{lemma} \label{lemma:vector_boosting} (\textbf{Vector boosting}) In \eqref{eqn:vector_bound_pre_boost}, set p = 1/8. Let $\hat{A} = \texttt{Med}(\hat{A}_1, \ldots, \hat{A}_K)$, where $\hat{A}_1, \ldots, \hat{A}_K$ are $K$ independent OLS estimators corresponding to different buckets, each satisfying \eqref{eqn:vector_bound_pre_boost}, and $\texttt{Med}$ is the geometric median operator as defined in \eqref{eqn:GM}. Given $\delta > 0$, when $K \geq \ceil{c_1 \log(1/\delta)}$, the following holds with probability at least $1 - \delta$:
\begin{equation}
    \norm{\hat{A} - A} \leq c_2 d^{3/2}\sqrt{\frac{\log(1/\delta)}{N \lambda_{\min}(G_T)}}.
    \label{eqn:vector_boosting_result}
\end{equation}
\end{lemma}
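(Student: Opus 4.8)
The plan is to mirror the boosting argument of Lemma~\ref{lemma:scalar_boosting}, but to replace the elementary one-dimensional median property with the geometric-median deviation bound of Lemma~\ref{lemma:geometric_median}. The first move is to pass from the spectral norm (in which the per-bucket guarantee \eqref{eqn:vector_bound_pre_boost} is phrased) to the Frobenius norm, since Lemma~\ref{lemma:geometric_median} is stated for $\norm{\cdot}_F$. Using $\norm{M} \leq \norm{M}_F \leq \sqrt{d}\,\norm{M}$ for any $M \in \mathbb{R}^{d\times d}$, I would define
$$ r \triangleq c\, d^{3/2}\sqrt{\frac{1}{p M \lambda_{\min}(G_T)}}, $$
so that \eqref{eqn:vector_bound_pre_boost} implies $\normlr{\hat{A}_j - A}_F \leq r$ with probability at least $1-p = 7/8$ for each bucket $j$. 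The extra $\sqrt{d}$ incurred in this conversion is precisely the source of one of the two dimension factors flagged in the ``Avenues for improvement'' discussion.

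Next I would invoke the contrapositive of Lemma~\ref{lemma:geometric_median} with a fixed constant $\alpha \in (1/8, 1/2)$, say $\alpha = 1/4$, for which $C_\alpha = (1-\alpha)\sqrt{1/(1-2\alpha)} = 3\sqrt{2}/4$ is an absolute constant. Lemma~\ref{lemma:geometric_median} guarantees that the ``bad'' event $\{\normlr{\hat{A}-A}_F > C_\alpha r\}$ forces strictly more than $\alpha K$ of the buckets to satisfy $\normlr{\hat{A}_j-A}_F > r$. Letting $Y_j$ denote the indicator of $\{\normlr{\hat{A}_j - A}_F > r\}$, and using that the buckets are disjoint so that the $Y_j$ are i.i.d.\ $\{0,1\}$ variables with $\E[Y_j] \leq p = 1/8$, the bad event for the median is therefore contained in $\{\sum_{j\in[K]} Y_j > \alpha K\}$.

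The concentration step is then identical in spirit to the scalar case: since $\alpha - \E[Y_j] \geq 1/4 - 1/8 = 1/8 > 0$, Hoeffding's inequality yields
$$ \mathbb{P}\left(\sum_{j\in[K]} Y_j > \alpha K\right) \leq \exp\!\left(-2K(\alpha - p)^2\right) = \exp(-K/32) \leq \delta $$
whenever $K \geq 32\log(1/\delta)$, which fixes the universal constant $c_1$. On the complementary event, of probability at least $1-\delta$, I would conclude $\norm{\hat{A}-A} \leq \normlr{\hat{A}-A}_F \leq C_\alpha r$; substituting $p = 1/8$, $M = N/K$, and $K = \ceil{c_1\log(1/\delta)}$ then collapses the right-hand side to $c_2\, d^{3/2}\sqrt{\log(1/\delta)/(N\lambda_{\min}(G_T))}$, as claimed.

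The only genuinely delicate point, and the main obstacle, lies in the correct use of Lemma~\ref{lemma:geometric_median}: one must pick $\alpha$ simultaneously larger than the per-bucket failure probability $p$ (so that Hoeffding produces exponential decay in $K$) and smaller than $1/2$ (so that $C_\alpha$ remains a bounded constant). This dual constraint is exactly why $p$ is tuned down to $1/8$ here rather than $1/4$ as in the scalar case, and it is worth verifying that no hidden dependence on $d$ sneaks into $C_\alpha$. Everything else---the spectral-to-Frobenius conversion, the i.i.d.\ structure of the indicators, and the final bookkeeping of universal constants---is routine.
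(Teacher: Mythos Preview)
Your proposal is correct and follows essentially the same route as the paper: both set $p=1/8$, $\alpha=1/4$, convert the per-bucket spectral bound to a Frobenius bound at the cost of a $\sqrt{d}$ factor, invoke Lemma~\ref{lemma:geometric_median} to reduce the bad event $\{\norm{\hat A-A}_F>C_\alpha r\}$ to $\{\sum_j Y_j>K/4\}$, and finish with Hoeffding to obtain $K\geq 32\log(1/\delta)$. The only cosmetic difference is that the paper introduces a second indicator $Z_j$ for the spectral event $\{\norm{\hat A_j-A}>\varepsilon\}$ and uses the implication $Y_j=1\Rightarrow Z_j=1$ before applying Hoeffding, whereas you bound $\E[Y_j]\leq p$ directly; the two are logically equivalent.
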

\begin{proof}
    Let $\varepsilon = c d\sqrt{\frac{1}{p M \lambda_{\min}(G_T)}}$ where $c$ is as in \eqref{eqn:vector_bound_pre_boost}. Consider a ``bad" event $\norm{\hat{A} - A} > C_\alpha \sqrt{d}\varepsilon$, where we use Lemma \ref{lemma:geometric_median} with $\alpha = 1/4$ which makes $C_\alpha = \frac{3}{2\sqrt{2}}$. Solving for $r$ by setting $C_\alpha r = C_\alpha \sqrt{d}\varepsilon$, we define the indicator random variables of events $\norm{\hat{A}_j - A}_F \geq \sqrt{d}\varepsilon$ and $\norm{\hat{A}_j - A} \geq \varepsilon$ as $Y_j$ and $Z_j$, respectively. Under this setting, the following implications hold:
    \begin{align*}
        \norm{\hat{A} - A} > C_\alpha\sqrt{d}\varepsilon &\overset{(a)}{\implies} \norm{\hat{A} - A}_F > C_\alpha\sqrt{d}\varepsilon \\
        &\overset{(b)}{\implies} \sum_{j \in [K]} Y_j > K/4 \\
        &\overset{(c)}{\implies} \sum_{j \in [K]} Z_j > K/4.
    \end{align*}
    In the above steps, (a) follows from $\norm{\hat{A} - A} \leq \norm{\hat{A} - A}_F$, (b) due to Lemma \ref{lemma:geometric_median} since we have defined the indicator random variables $Y_j$ appropriately, and (c) follows from the definition of the indicator random variables $Z_j$ and the fact that $\norm{\hat{A}_j - A}_F \leq \sqrt{d} \norm{\hat{A}_j - A}$. Based on the above implications, we have
    \begin{align*}
        \mathbb{P}\left( \norm{\hat{A} - A} > C_\alpha \sqrt{d}\varepsilon \right) \leq \mathbb{P} \left( \sum_{j \in [K]} Z_j \geq K/4 \right) \overset{(a)}{=} \mathbb{P} \left( \frac{1}{K} \sum_{j \in [K]} (Z_j - \E[Z_1]) \geq \frac{1}{4}  - \E[Z_1] \right).
    \end{align*}
    In (a), we subtracted the common expectation since the trajectories are identically distributed. Furthermore, since the collections of trajectories from different buckets are disjoint, each of the $Z_j$'s are i.i.d. random variables in $\{0,1\}$. This enables us to use Hoeffding's inequality to infer that
    \begin{align*}
        \mathbb{P} \left( \frac{1}{K} \sum_{j \in [K]} (Z_j - \E[Y_1]) \geq \frac{1}{4}  - \E[Z_1] \right) &\leq \exp{\left(-2 K (1/4 - \E[Z_1])^2\right)} \\
        & \overset{(b)}{\leq} \exp{\left(-2 K (1/4 - 1/8)^2\right)}.
    \end{align*}
    In the above steps, (b) follows due to $\E[Z_1] = \mathbb{P}(\norm{\hat{A}_1 - A} \geq \varepsilon) \leq p = 1/8.$ Based on the above, we have
    $$ \mathbb{P}\left(\norm{\hat{A} - A} > C_\alpha \sqrt{d}\varepsilon \right) \leq \exp(-K/32) \leq \delta,$$
    when $K=\ceil{32 \log(1/\delta)}$. Using this expression for $K$ in $\varepsilon= c d\sqrt{\frac{K}{p N \lambda_{\min}(G_T)}}$, where we used $M=N/K$, completes the proof.
\end{proof}
\newpage
\section{Proof of Theorem \ref{thm:corruption}: System Identification under Adversarial Corruptions}
In this section, we prove the high probability bound for system identification under adversarial corruptions, as presented in Theorem \ref{thm:corruption}. We follow an approach similar to the proof of Lemma \ref{lemma:vector_boosting} which concerns the boosting step of \texttt{Robust-SysID}. We use Lemma \ref{lemma:geometric_median} with $\alpha = 1/4$ which makes $C_\alpha = \frac{3}{2\sqrt{2}}$. Fixing $p = 1/8$, let $\varepsilon = c d\sqrt{\frac{1}{p M \lambda_{\min}(G_T)}}$ where $c$ is as in \eqref{eqn:vector_bound_pre_boost}. Consider a ``bad" event $\norm{\hat{A} - A} > C_\alpha \sqrt{d}\varepsilon$. Next, we define the following indicator random variables $\forall j \in [K]$: let $Y_j$ be the indicator random variable of the event $\norm{\hat{A}_j - A}_F \geq \sqrt{d}\varepsilon$, let $Z_j$ be the indicator random variable of the event $\norm{\hat{A}_j - A} \geq \varepsilon$, and let $W_j$ be the indicator random variable of the event ``\textit{bucket j has no corruptions}". Under this setting, we have the following implications:
\begin{align*}
    \norm{\hat{A} - A} > C_\alpha\sqrt{d}\varepsilon &\overset{(a)}{\implies} \norm{\hat{A} - A}_F > C_\alpha\sqrt{d}\varepsilon \\
    &\overset{(b)}{\implies} \sum_{j \in [K]} Y_j > K/4 \\
    &\overset{(c)}{\implies} \sum_{j \in [K]} Z_j > K/4 \\
    &\iff \sum_{j \in [K]} (Z_jW_j + Z_j(1 - W_j)) > K/4 \\
    &\overset{(d)}{\implies} \sum_{j \in [K]} (Z_jW_j + (1 - W_j)) > K/4 \\
    &\overset{(e)}{\implies} \sum_{j \in [K]} Z_jW_j + \eta N > K/4 \\
    &\iff \sum_{j \in [K]} Z_jW_j > K/4 - \eta N.
\end{align*}
In the above steps, the implications (a), (b) and (c) are justified in the proof of Lemma \ref{lemma:vector_boosting}; (d) follows as $Z_j \leq 1$ $\forall j \in [K]$, and (e) follows from the fact that $\sum_{j \in [K]} (1 - W_j) \leq \eta N$, i.e. the number of buckets with corruptions is less than the maximum number of corrupted trajectories. Based on the above implications, we have
\begin{align}
    \mathbb{P}\left( \norm{\hat{A} - A} > C_\alpha \sqrt{d}\varepsilon \right) &\leq \mathbb{P} \left( \sum_{j \in [K]}  Z_jW_j > K/4 - \eta N \right) \nonumber\\
    &= \mathbb{P} \left( \frac{1}{K} \sum_{j \in [K]} (Z_jW_j - \E[Z_jW_j]) > \frac{1}{K}\sum_{j \in [K]}\left(\frac{1}{4}  - \frac{\eta N}{K} - \E[Z_jW_j]\right) \right) \nonumber\\
    &\leq \mathbb{P} \left( \frac{1}{K} \sum_{j \in [K]} (Z_jW_j - \E[Z_jW_j]) \geq \frac{1}{4} - \frac{\eta N}{K} - p \right), \label{eqn:corruption_pre_hoeffding}
\end{align}
where the last inequality follows from the fact that $p$, based on \eqref{eqn:vector_bound_pre_boost}, is an upper-bound on the probability of an event $\norm{\hat{A}_j - A} \geq \varepsilon$ for a bucket without any corruptions. In particular, we have for all $j \in [K]$: 
\begin{align*}
    \E[Z_jW_j] &= \mathbb{P}\left( \{ \norm{\hat{A}_j - A} \geq \varepsilon\} \text{ and } \{ W_j = 1\}\right)\\
    &\overset{(a)}{\leq} \mathbb{P}\left(\{ \norm{\hat{A}_j - A} \geq \varepsilon\} \mid \{ W_j = 1\} \right)\\
    &\overset{(b)}{\leq} p,
\end{align*}
where (a) follows due to Bayes' law, and (b) due to \eqref{eqn:vector_bound_pre_boost}, where we established an estimation error bound for buckets without corruption.
Substituting $p = 1/8$ in \eqref{eqn:corruption_pre_hoeffding}, and using Hoeffding's inequality, we have 
\begin{align*}
    \mathbb{P} \left( \frac{1}{K} \sum_{j \in [K]} (Z_jW_j - \E[Z_jW_j]) \geq \frac{1}{4} - \frac{\eta N}{K} - p \right) \leq \exp{\left(-2 K \left(1/8 - \frac{\eta N}{K}\right)^2\right)}.
\end{align*}
Setting the R.H.S $\leq \delta$ in the above display, we require $2K (1/8 - \eta N/K)^2 \geq \log(1/\delta)$, which is satisfied when $K \geq 32(\log(1/\delta) + \eta N/2)$. Furthermore, since we do not know which buckets are corrupted beforehand, we need each bucket to have $M \geq c_2 d^2 C_w C_A$ trajectories to ensure \eqref{eqn:vector_bound_pre_boost} holds for buckets without corruption. Finally, we get the requirement on the corruption fraction $\eta < 0.5/(c_1d^2C_AC_w)$ by noting that $N = MK$. This completes the proof of Theorem \ref{thm:corruption}.

\end{document}